\newmdenv[
linecolor=gray,
linewidth=3pt,
topline=false,
bottomline=false,
rightline=false,
rightmargin=-10pt,
leftmargin=0pt,
skipabove=\topsep,
skipbelow=\topsep
]{siderules}
\newtheorem{theorem}{Theorem}
\newtheorem{lemma}[theorem]{Lemma}
\def\cY{{\mathcal Y}}
\def\Tr{\text{Tr}}
\newcommand{\defeq}{\stackrel{\mathrm{def}}{=}}
\def\BB{\mathbb B}
\def\RR{\mathbb R}
\def\QQ{\mathbb Q}
\def\ZZ{\mathbb Z}
\def\CC{\mathbb C}
\def\SS{\mathbb S}
\def\cA{\mathcal A}
\def\cB{\mathcal B}
\def\cC{\mathcal C}
\def\cF{\mathcal F}
\def\cG{\mathcal G}
\def\cH{\mathcal H}
\def\cI{\mathcal I}
\def\cS{\mathcal S}
\def\cP{\mathcal P}
\def\cT{\mathcal T}
\def\cL{\mathcal L}
\def\cH{\mathcal H}
\def\cE{\mathcal E}
\def\cV{\mathcal V}
\def\cX{\mathcal X}
\def\bx{\mathbf x}
\def\bT{\mathbf T}
\def\bI{\mathbf I}
\def\by{\mathbf y}
\def\bu{\mathbf u}
\def\ba{\mathbf a}
\def\bc{\mathbf c}
\def\bb{\mathbf b}
\def\bz{\mathbf z}
\def\bv{\mathbf v}
\def\bw{\mathbf w}
\def\bzero{\mathbf 0}
\def\b1{\mathbf 1}
\def\cI{{\mathcal I}}
\def\cY{{\mathcal Y}}
\newcommand{\vol}{\operatorname{vol}}
\newcommand{\poly}{\operatorname{poly}}
\newcommand{\polylog}{\operatorname{polylog}}
\newcommand{\rank}{\operatorname{rank}}
\newcommand{\Diag}{\operatorname{Diag}}
\newtheorem{proposition}{Proposition}
\newtheorem{fact}{Fact}
\newtheorem{remark}{Remark}
\newtheorem{corollary}{Corollary}
\newcommand{\raf}[1]{(\ref{#1})}
\newcommand{\quasipoly}{\operatorname{quasi-poly}}
\newcommand{\hide}[1]{}
\newcommand{\bone}{\ensuremath{\boldsymbol{1}}}
\newcommand{\add}[1]{{\color{black}{#1}}}
\title{Dual Bounded Generation: Polynomial, Second-order Cone and Positive Semidefinite Matrix Inequalities}
\author{ 
Khaled Elbassioni\thanks {Khalifa University of Science and Technology, Abu Dhabi, UAE;
(khaled.elbassioni@ku.ac.ae)}
}
\begin{document}
\date{}
\maketitle
\begin{abstract}
In the monotone integer dualization  problem, we are given two sets of vectors in an integer box such that no vector in the first set is dominated by a vector in the second.  The question is  to check if the two sets of vectors cover the entire integer box by upward and downward domination, respectively. It is known that the problem is (quasi-)polynomially equivalent to that of enumerating all maximal feasible solutions of a given monotone system of linear/separable/supermodular inequalities over integer vectors. The equivalence is established via showing that the dual family of minimal infeasible vectors has size bounded by a (quasi-)polynomial in the sizes of the family to be generated and the input description.  Continuing in this line of work, in this paper, we consider systems of polynomial, second-order cone, and semidefinite inequalities. We give sufficient conditions under which such bounds can be established and highlight some applications. 
\end{abstract}

\section{Introduction}
\label{intro}

We consider a {\it monotone} system of inequalities of the form:
\begin{alignat}{3}
\label{mon}
\quad & \displaystyle f_i(\bx)\leq ~t_i,\quad\text{ for }i\in[r]:=\{1,\ldots,r\},
\end{alignat}
over a vector of {\it integer} variables $\bx=(x_1,\ldots,x_n)\in\ZZ_+^n$, where $f_i:\ZZ_+^n\mapsto\RR_+$ is a {\it monotone} (non-decreasing) non-negative function on $\ZZ_+^n$, that is, $\bx,\by\in \ZZ_+^n$ and $\bx\geq \by$ imply that $f_i(\bx)\ge f_i(\by)$, for all $i\in[r]$. A vector $\bx\in\ZZ_+^n$ is said to be a {\it maximal feasible vector} (or solution) for \raf{mon} if $\bx$ is feasible for \raf{mon} and $\bx+\b1^j$ is not feasible for all $j\in[n]$, where throughout we use $\b1^j$ to denote the $j$-{th} unit $n$-dimensional vector. Likewise, a vector $\bx\in\ZZ_+^n$ is said to be a {\it minimal infeasible vector} for \raf{mon} if $\bx$ is infeasible for \raf{mon} and $\bx-\b1^j$ is feasible for all $j\in[n]$ such that $x_j>0$.
Let $\cF$ and $\cI$ be respectively the families of maximal feasible and minimal infeasible vectors for \raf{mon}. We are interested in {\it incrementally} generating the family $\cF$:

\begin{description}
	\item[{\em GEN}$(\cY)$:] {\em Given a monotone 
		system~\raf{mon}, and a subfamily $\cY \subseteq \cF$ of its
		maximal feasible vectors, either find a new maximal
		vector $\bx\in\cF\setminus\cY$, or \add{confirm} that 
		$\cY=\cF$.}
\end{description}

\noindent Clearly, the entire family $\cF$ can be generated by initializing $\cY=\emptyset$ and iteratively solving the above
problem $|\cF|+1$ times.  It was shown in~\cite{BEGKM02-SICOMP} that, when each $f_i$ is a {\it linear} function, problem {\em GEN}$(\cY)$ can be solved in {\it quasi-polynomial} time $k^{o(\log k)}$ time, where $k=\max\{n, r, |\cY|\}$, while the similar incremental generation problem for the family of minimal infeasible solutions is NP-hard. This result was extended to  the case when each function $f_i$ can be written as the sum $f_i(\bx)=\sum_{j=1}^nf_{ij}(x_j)$ of {\it single-variable} monotone functions $f_{ij}$, and more generally to the case when each $f_i$ is the sum of products of {\it constant number} of single-variable monotone functions. A particularly interesting example of the latter case is when each $f_i$ is a polynomial:
\begin{align}\label{poly}
f_i(\bx)=\sum_{H\in\cH_i}a_H\prod_{j\in H} x_j^{d_{H,j}}, 
\end{align}
where each $\cH_i\subseteq 2^{[n]}$ is a given {\it mutliset} family with 
$d_{H,j}\in\ZZ_+\setminus\{0\}$ and $a_H>0$ for all $H\in\cH_i$.

\medskip
It will be convenient\add{\footnote{\add{It is easy to see that the family $\cF$ must be finite and each element in $\cF$ is bounded. Indeed, suppose that there is an infinite sequence of elements $\bx(\ell)\in\cF$, $\ell=1,2,\ldots$. Then there must exist a $j\in[n]$ such that $x_j(\ell)\rightarrow\infty$ as $\ell\rightarrow\infty$. But then, for any $i\in[r]$, we would have $t_i\ge\lim_{\ell\rightarrow\infty} f_i(\bx(\ell))=\lim_{\ell\rightarrow\infty} f(\bx(\ell)+\b1^j)$, contradicting the fact that $\bx(\ell)\in\cF$. As an example, consider the inequality $x_1x_2\le  1$ over $\ZZ_+^2$. Then $\cF=\{(1,1)^\top\}$, while $(0,z)^\top$ and $(z,0)^\top$ are not in $ \cF$ for all $z\in\ZZ_+$.}}} to restrict the domain of each variable $x_j$ to a subset $\cC_j=\{0,1\ldots,c_j\}$ of $\ZZ_+$. Such a bound $c_j$ of {\it polynomial bit-length} can be assumed in all the examples considered in this paper. For instance, if each function $f_i$ is a polynomial of the form~\raf{poly}, then any feasible solution for~\raf{mon} satisfies\footnote{assuming that $x_j\ne\infty$.} $$x_j\le c_j':=\add{\min_i\frac{ t_i}{\min_{H\in\cH_i:~j\in H} a_{H}}}.$$ Thus, we may set $c_j:=\lfloor c_j'\rfloor$, for all $j\in[n]$. Keeping this in mind, we will assume in the rest of the paper that the variable vector $\bx$ is chosen from an integer box
 \add{$\cC=\cC_1\times \cC_2\times \cdots \times \cC_n=\{\bx\in\ZZ^n|~\bzero\le\bx\le \bc\}$}, where $\cC_j=\{0,1\ldots,c_j\}$ for $j\in [n]$.

\medskip

For an {\it antichain} (that is, a subset of pairwise incomparable elements) $\cA\subseteq\cC$, denote by $\cI(\cA)$ the set of {\it minimal non-dominated} elements of $\cA$, i.e., the set of those
elements $\bx\in\cC$ that are minimal with respect to the property
that $\bx\not\leq \ba$ for all $\ba\in\cA$. It is easy to see that the mapping $\cI:\text{Antichains}(\cC)\to\text{Antichains}(\cC)$ is one-to-one, and hence, the families $\cA$ and $\cI(\cA)$ can be thought of {\it dual} to each other. In particular, if $\cF$ is the family of maximal feasible solutions for~\raf{mon}, then the dual 
$\cI(\cF)$ represents the family of {\em minimal} vectors
of $\cC$ which {do not} satisfy \raf{mon}.

Let $f : \cC \mapsto\RR_+$, be a real-valued function over $\cC$. $f$ is said to be {\it supermodular} if
\begin{align}\label{sup}
f(\bx \vee \by) + f(\bx \wedge \by) \geq  f(\bx) + f(\by)
\end{align}
holds for all $\bx,\by\in\cC$, where $\vee$ and $\wedge$ denote, respectively, the component-wise maximum and minimum operators over $\cC$: $(\bx \vee \by)_j=\max\{x_j,y_j\}$ and $(\bx \wedge \by)_j=\min\{x_j,y_j\}$, for $j\in[n]$. 
 It was shown in \cite{BEGK-DAM03} that if all the functions $f_i$ in~\raf{mon} are {\it integer}-valued supermodular functions, then {\em GEN}$(\cY)$ can be solved in quasi-polynomial time $k^{o(\log k\cdot\log R)}$ time, where $k=\max\{n, r,|\cY|\}$ and $R=\max_{i\in[r]}f_i(\bc)$. In particular, if all functions have {\it quasi-polynomially bounded integral range} then all maximal feasible vectors for the system can be enumerated in quasi-polynomial time. This, as well as all the above-mentioned results, were established via a (quasi-)polynomial time reduction to the following {\em dualization} problem on
integer boxes: 

\begin{description}
	\item[{\em DUAL}$(\cC, \cA,\cB)$:] {\em Given an integer box $\cC$, an
		antichain of vectors $\cA\subseteq \cC$ and a subset $\cB \subseteq
		\cI(\cA)$ of its minimal non-dominated vectors, either find a new
		minimal non-dominated vector $\bx \in \cI(\cA)\setminus\cB$, or \add{confirm} that no such vector exists, i.e., $\cB=\cI(\cA)$.}
\end{description}
It is known that problem {\em DUAL}$(\cC,\cA,\cB)$ can be
solved in $poly(n)$ $+$ $m^{o( \log m)}$ time, where
$m=|\cA|+|\cB|$ (see \cite{BEGKM02-SICOMP,FK96}).
However, it is still open whether {\em DUAL}$(\cC,\cA,\cB)$ has a polynomial time algorithm.
To recall (at a high level) the reduction from problem {\em GEN}$(\cY)$ to problem {\em DUAL}$(\cC,\cA,\cB)$, let us follow~\cite{BGKM01} in calling the family $\cF$ {\em uniformly
	dual-bounded}, if for every subfamily $\cY\subseteq \cF$ we have
\begin{equation}\label{udbdd}
|\cI(\cY)\cap \cI(\cF)|\leq q(\cL,|\cY|)
\end{equation}
for some (quasi-)polynomial $q(\cdot)$, where $\cL$ is the total length of the binary encoding of the parameters defining the system~\raf{mon}. It was shown in~\cite{BEGKM02-SICOMP} (extending the similar result  in~\cite{BI95,GK99}, for \add{the} binary case) that the incremental {\it joint} generation problem of enumerating (in a {\it non-controlled} order) the union family $\cF\cup\cI(\cF)$ can be reduced in polynomial time to the dualization problem {\em DUAL}$(\cC, \cdot,\cdot)$ over an integer box $\cC$. It follows then that for uniformly dual-bounded families $\cF$, problem GEN$(\cY)$ can be reduced in (quasi-)polynomial time to the dualization problem by performing joint generation and discarding the elements of the unwanted family $\cI(\cY)\cap\cI(\cF)$. Since the number of discarded elements is bounded by a (quasi-)polynomial $q(\cL,|\cY|)$ in the input-output size, the overall running time for enumerating $\cF$ is dominated by a (quasi-)polynomial in $q(\cL,|\cY|)$ and the time $T_\text{\em DUAL}$ needed to solve the dualization problem, resulting in an overall quasi-polynomial time, if the family $\cF$ is uniformly dual-bounded. 

\medskip

We now state some of the known bounds of the form~\raf{udbdd} for families $\cF$ of maximal feasible vectors for \raf{mon}:
\begin{itemize}
	\item [(B1)] When the all the functions $f_i$ are integer-valued supermodular 
	of range $\{0,1,\ldots,R\}$, we have $q(\cL,|\cY|)\le r|\cY|^{o(\log(R-t))}$, where $t:=\min\{t_1,\ldots,t_r\}$ \cite{BEGK-DAM03}. 
	\item[(B2)] If each $f_i$ is linear, that is, $f_i(x)=(\ba^i)^\top\bx$, for some non-negative vector $\ba^i$, then $q(\cL,|\cY|)\le rn|\cY|$ \cite{BEGKM02-SICOMP}. In fact, this is also true for the case when each $f_i(\bx)=\sum_{j=1}^nf_{ij}(x_j)$ is the sum of {\it single-variable} monotone functions \cite{KBEGM07}.
	\item[(B3)] More generally, if each $f_i$ is the sum of at most $s$ terms each of which is the product of at most $p$ single-variable monotone functions, then $q(\cL,|\cY|)\le rsp(2|\cY|+1)^p$ \cite{KBEGM07}. This is true, in  particular, for a system of polynomial inequalities of the form~\raf{poly}, when the {\it dimension} of each hypergraph $\cH_i$ (that is, the maximum size of a hyperedge) is bounded by a constant.
	
\end{itemize}

\paragraph{Main results.} We extend the above results (B1)-(B3) as follows:

\begin{itemize}
	\item[(B4)] We consider {\it real}-valued supermodular functions with range $[0,R]$ and obtain a bound $q(\cL,|\cY|)\le r|\cY|^{o(\log\frac{R-t}{\tau})}$ that depends on the minimum {\it traction} (i.e., minimum possible positive change \add{in the value of the fucntion that can  result from increasing the argument only along one coordinate\footnote{\add{Note that the minimum possible positive change $\kappa(f)$ in the value of a (supermodular) function $f$ can be much smaller than the traction $\tau(f)$ (as we define it). For example, the (monotone linear) function  $f:\{0,1\}^2\to\RR_+$ defined as $f(\bx):=(1+\epsilon)x_1+x_2$, where $\epsilon>0$ is an arbitrarily small constant, has $\kappa(f)=\epsilon$, while $\tau(f)=1$.}  }}) $\tau$ of the functions $f_i$. In particular, when $\frac{R-t}\tau=\quasipoly(\cL,|\cY|)$, we obtain a quasi-polynomial time enumeration algorithm.
	
	\item[(B5)] As a direct application of the result in (B4), we consider the case when each function $f_i$ is a product of real-valued {\it affine} functions with rational coefficients and the objective is to enumerate the family of all {\it minimal feasible} solutions of the system $f_i(\bx)\ge t_i$ for $i=1,\ldots,r$. We derive a bound of $q(\cL,|\cY|) \le|\cY|^{o(\log\cL)} $ on the size of the dual family, implying that the problem  of enumerating all minimal feasible solutions of such systems can be solved in quasi-polynomial time. 
	   
	\item[(B6)] We show that, if each $f_i$ is an integer-valued polynomial function (as in \raf{poly}) of range $\{0,1,\ldots,R\}$ having at most $s$ terms in which each variable has degree at most $d$, then \add{$q(\cL,|\cY|)\le r|\cY| (R-\min_it_i,$ $s+2^{d+2}n\max_it_i|\cY|)$}. In particular, if $d=\polylog(\cL)$ and, for all $i$, either $t_i\ge R_i-\quasipoly(\cL)$  or $t_i\le\quasipoly(\cL)$, then all maximal feasible vectors for~\raf{mon} can be enumerated in quasi-polynomial time. In contrast to the result in (B3), this does not require the hypergraphs defining the polynomials in~\raf{poly} to have fixed dimension.
	\item[(B7)] We consider the case when~\raf{mon} is a system of {\it second-order cone} inequalities, that is, when each $f_i$ is a real-valued function of the form $f_i(\bx):=\|A^i\bx\|+(\bb^i)^\top\bx$, where $A^i\in\RR_+^{d\times n}$ and $\bb^i\in\RR_+^n$ are given matrices and vectors, and $\|\cdot\|$ denotes the $\ell_2$-norm. We show in this case that $q(\cL,|\cY|)\le O(n)^{2d+1}r|\cY|$. In particular if $d=\polylog(\cL)$ then problem {\em GEN}$(\cY)$ can be solved in quasi-polynomial time.
	\item[(B8)] Finally, we consider the case when each  $f_i$ is a real-valued function of the form $f_i=\lambda_{\max}(\sum_{j=1}^nA^{i,j}x_j)$ where
 $A^{i,j}\in\RR_+^{d\times d}$ is a {\it positive semidefinite} matrix, and $\lambda_{\max}(X)$ denotes the maximum eigenvalue of the matrix $X$. This gives rise to a semidefinite inequality system. We show in this case that $q(\cL,|\cY|)\le O(n)^{2d+1}r|\cY|$, and consequently, if $d=\polylog(\cL)$ then problem {\em GEN}$(\cY)$ can be solved in quasi-polynomial time.
\end{itemize}
	 
We give some motivating applications of these results in the next section. Proofs of the bounds claimed in (B4), (B5), (B6), (B7) and (B8) are given in Sections~\ref{sec:supermodular}, ~\ref{sec:prodcut},~\ref{sec:polynomial}, \ref{sec:soc} and~\ref{sec:psd}, respectively. To simply our presentation, it will be enough to establish the bound for a single inequality:
\begin{alignat}{3}
\label{mon-s}
\quad & \displaystyle f(\bx)\leq ~t,~\quad \bx\in\cC,
\end{alignat}
where we assume w.l.o.g. that $0\le t\le f(\bc)$.
Indeed, given a system~\raf{mon}, let us denote by $\cF_i$ the set of maximal feasible solutions of the inequality $f_i(\bx)\le t_i$. Suppose that we manage to show that $|\cI(\cY)\cap\cI(\cF_i)|\le q_i(\cL,|\cY|)$ for any $\cY\subseteq\cF_i$. Then, for any subset $\cY\subseteq\cF$ of maximal feasible solutions of the system, a union bound can be applied to obtain\add{\footnote{\add{To see that $\cX:=\cI(\cY) \cap  \cI(\cF)=\cY:=\bigcup_{i=1}^r\left(\cI(\cY) \cap  \cI(\cF_i)\right)$, note first that $\cX\subseteq\cY$ since $\cI(\cF)\subseteq\bigcup_{i=1}^r \cI(\cF_i)$. On the other hand, for any $\bx\in\cY$, $\bx\in \cI(\cY) \cap  \cI(\cF_i)$ for some $i\in[r]$. Then, $\bx\in\cI(\cF_i)$ implies that $f_i(\bx)>t_i$, that is, $\bx$ is infeasible for the system, while $\bx\in\cI(\cY)$ implies that $\bx-\bone^j\le \by$, for some $\by\in\cY\subseteq\cF$, whenever $x_j>0$, that is, $\bx-\bone^j$  is feasible for the system. It follows that $\bx\in \cX$.}}}
\begin{align*}
|\cI(\cY) \cap  \cI(\cF)|=\left|\bigcup_{i=1}^r\left(\cI(\cY) \cap  \cI(\cF_i)\right)\right|\le\sum_{i=1}^r|\cI(\cY) \cap  \cI(\cF_i)|\le\sum_{i=1}^rq_i(\cL,|\cY|).
\end{align*}
Thus, in deriving the stated bounds, we will focus on a single inequality in the system~\raf{mon}.
\section{Some Applications}

\paragraph{Fair allocation of goods.}
Consider a set of $m$ agents and $n$ types of indivisible goods (or items). We assume there is {\it unlimited} supply of each item.  Each agent $i$ demands at least $t_i$ items to be allocated to it and its utility for receiving an allocation $\bx^i:=(x_{ij}~|~j\in[n])$ is given by the linear function $u_i(\bx):=\sum_ja_{ij}x_{ij}$, where $x_{ij}\in\ZZ_+$ is the number of goods of type $j$ allocated to agent $i$. Given a threshold parameter $t$, we are interested in finding all minimal allocations satisfying the demand constraints and achieving a {\it Nash social welfare} of value at least $t$:
\begin{alignat}{3}
	\label{nsw}
	\quad & \displaystyle \left(\prod_{i=1}^mu_i(\bx)\right)^{1/m}\ge t,\\	
	\quad & \sum_{j}x_{ij}\ge t_i,\quad\text{ for }i\in[m],\\
	&\bx\in\ZZ_+^{mn}.\nonumber
\end{alignat}
 This gives rise to a monotone system of inequalities, each of which is involving either a linear function  or a product of linear functions. It follows from the result in (B5) that the family of minimal feasible solutions for this system is uniformly dual-bounded and hence can be enumerated in incremental quasi-polynomial time.  

\paragraph{Chance-constrained \add{multi}-dimensional knapsack inequalities.}
Consider a system~\raf{mon} where each $f_i(\bx):=(\bw^i)^\top\bx$ is a linear function defined by a non-negative weight vector $\bw^i\in\RR^n$. The elements of $[n]$ can be interpreted as items to be packed into $r$ knapsacks of capacities $t_1,\ldots,t_r$, where $w^i_{j}$ represents the size requirement of item $j$ in knapsack $i$. In the stochastic version, each vector $\bw^i$ is drawn from a {\it multivariate normal} distribution with mean $\ba^i \in\RR_+^n$ and covariance matrix $\Sigma^i\in\SS^{n\times n}_+$, i.e., $\bw^i\sim N(\ba^i,\Sigma^i)$. The requirement is to pack the items into the the knapsacks, such that the $i$th capacity constraint is satisfied with probability at  least $\alpha_i\in[0,1]$: 
\begin{alignat}{3}
\label{kp}
\quad & \displaystyle \Pr[(\bw^i)^\top\bx\leq ~t_i]\ge\alpha_i,\quad\text{ for }i\in[r],\\
&\bx\in\{0,1\}^n.\nonumber
\end{alignat}

As $(\bw^i)^\top\bx\sim N((\ba^i)^\top\bx,\bx^\top\Sigma^i\bx)$, we can reformulate the constraints~\raf{kp} as: 
\begin{alignat}{3}
\label{kp2}
\quad & \displaystyle (\ba^i)^\top\bx+\Phi^{-1}(\alpha_i)\sqrt{\bx^\top\Sigma^i\bx}\le t_i,\quad\text{ for }i\in[r]:=\{1,\ldots,r\},\\
&\bx\in\{0,1\}^n,\nonumber
\end{alignat} 
where $\Phi(\cdot)$ represents the cumulative distribution function of the standard normal distribution. While the complexity of enumerating the family of maximal feasible solutions for~\raf{kp2}, in general, remains open at this point, we can efficiently solve the problem in two special cases described below. In both cases, we assume that $\alpha_i\ge 0.5$ and hence $\Phi^{-1}(\alpha_i)\ge0$.

\begin{itemize}
	\item Fixed-rank case: this is the case when the  covariance matrices $\Sigma^i$ have {\it completely positive} (cp) rank $d_i$, i.e., we can find matrices $A^i\in\RR_+^{d_i\times n}$ such that  $\Sigma^i=(A^i)^\top A^i$ (this is, for example, the case when $\bw^i=(A^i)^\top \bz+\ba^i$, where $z_1,\ldots, z_{d_i}\sim N(0,1)$ are i.i.d.'s.). In such a case, we can rewrite~\raf{kp2} as a second order cone program:
	\begin{alignat}{3}
		\label{kp3}
		\quad & \displaystyle (\ba^i)^\top\bx+\Phi^{-1}(\alpha_i)\|A^i\bx\|\le t_i,\quad\text{ for }i\in[r]:=\{1,\ldots,r\},\\
		&\bx\in\{0,1\}^n.\nonumber
	\end{alignat} 
	It follows then from the result in (B7) that, when $\max_id_i=O(1)$, all maximal feasible solutions of \raf{kp} can be enumerated in quasi-polynomial time.
	
	\item Ordered independent case: this is the case when item sizes are independent: $A^i=D^i:=\Diag(d^i_1,\ldots,d^i_n)$ is a full-rank {\it diagonal} matrix, and in addition, we are given permutations $\sigma_1,\ldots,\sigma_r:[n]\to[n]$ s.t. for each $i\in[r]$:
	\begin{align}\label{order}
	 a^i_{\sigma_i(1)}\ge\cdots\ge a^i_{\sigma_i(n)} \quad\text{ and }\quad d^i_{\sigma_i(1)}\ge\cdots \ge d^i_{\sigma_i(n)},
	 \end{align}	
 that is, in each knapsack, the means and standard deviations of the item sizes can be ordered in the same way 
(this is true, for example, when $w_j^i\sim N(a^i_j,1)$ are independent). In this case, we can verify that the function $f_i(\bx):=(\ba^i)^\top\bx+\Phi^{-1}(\alpha_i)\|A^i\bx\|=(\ba^i)^\top\bx+\Phi^{-1}(\alpha_i)\sqrt{\sum_j(d_j^i)^2x_j}$ is {\it 2-monotonic} with permutation $\sigma_i$; see the definition in Section~\ref{sec:soc}. Indeed, for any $\bx\in\cC$ and $k,j\in[n]$ with $k<j$, $\bx_{\sigma_i(k)}=0$ and $\bx_{\sigma_i(j)}=1$, it holds by~\raf{order} that 
\begin{align*}
	\quad &f_i(\bx+\b1^{\sigma_i(k)}-\b1^{\sigma_i(j)})-f(\bx)=\\
	\quad & \quad\qquad a^i_{\sigma_i(k)}-a^{i}_{\sigma_i(j)}+\frac{\Phi^{-1}(\alpha_i)\left[(d^i_{\sigma_i(k)})^2-(d^i_{\sigma_i(j)})^2\right]}{\add{\sqrt{\sum_{j'}(d^i_{j'})^2x_{j'}+(d^i_{\sigma_i(k)})^2-(d^i_{\sigma_i(j)})^2}}+\add{\sqrt{\sum_{j'}(d^i_{j'})^2x_{j'}}}}
			\ge0.
	\end{align*} 
	Thus, we can use Lemma~\ref{l2} below to derive the the bound $|\cI(\cY)\cap \cI(\cF)|\le rn|\cY|$ for any  subset $\cY\subseteq\cF$ of the maximal feasible solutions for~\raf{kp2}, and hence conclude that the latter family can be enumerated in quasi-polynomial time in this case. Note that the same argument does not work if the ordering property~\raf{order} does not hold.
\end{itemize}

\paragraph{Chance-constrained covering binary programs.} In a similar setting as in the previous example, we are given $n$-dimensional normally distributed random vectors $\bw^i\sim N(\ba^i,\Sigma^i)$, for $i\in[r]$, and demands $t_1,\ldots,t_r\in\RR_+$. Here, $w^i_j$ can be interpreted as the coverage value for the $j$th item with respect to the $i$th demand. The requirement is to select a subset of the items, such that the $i$th demand constraint is satisfied with probability at least $\alpha_i\in[0,1]$: 
\begin{alignat}{3}
\label{kp4}
\quad & \displaystyle \Pr[(\bw^i)^\top\bx\geq ~t_i]\ge\alpha_i,\quad\text{ for }i\in[r],\\
&\bx\in\{0,1\}^n.\nonumber
\end{alignat}
As before, we can reformulate the constraints~\raf{kp} as a second-order cone program:
\begin{alignat}{3}
\label{kp5}
\quad & \displaystyle (\ba^i)^\top\bx+\Phi^{-1}(1-\alpha_i)\sqrt{\bx^\top\Sigma^i\bx}\ge t_i,\quad\text{ for }i\in[r]:=\{1,\ldots,r\},\\
&\bx\in\{0,1\}^n.\nonumber
\end{alignat}
In the special case when the random variables are independent, that is, when $\Sigma^i:=(D^i)^2:=\Diag((d^i_1)^2,\ldots,$ $(d^i_n)^2)$ is a full-rank {\it diagonal} matrix, and $\alpha_i\le0.5$ and hence $\Phi^{-1}(1-\alpha_i)\ge0$, we can enumerate the family $\cG$ of minimal feasible solutions for the system~\raf{kp4} by using the result in (B4). Indeed, let $$f_i(\bx):=R_i-(\ba^i)^\top(\b1-\bx)-\Phi^{-1}(1-\alpha_i)\|D^i(\b1-\bx)\|,$$ where $R_i:=(\ba^i)^\top\b1+\Phi^{-1}(1-\alpha_i)\sqrt{\b1^\top\Sigma^i\b1}$, and $\b1$ is $n$-dimensional the vector of all ones.  Then, for any $\bx\in\{0,1\}^n$ s.t. $x_j=0$,
\begin{align}\label{tr2}
f_i(\bx+\b1^j)-f_i(\bx)=a^i_j+\frac{\Phi^{-1}(1-\alpha_i)(d^i_{j})^2}{\sqrt{\sum_{j'\ne j}(d^i_{j'})^2(1-x_{j'})+(d^i_{j})^2}+\sqrt{\sum_{j'\ne j}(d^i_{j'})^2(1-x_{j'})}},
\end{align}
which is monotone increasing $\bx$. Supermodularity follows by Proposition~\ref{p1} below. Next we bound the traction $\tau_i$ and maximum range $R_i$ of each $f_i$. From~\raf{tr2}, we get $\tau_i\ge \min\left\{a_{\min},\Phi^{-1}(1-\alpha_i)\frac {d_{\min}^2}{2\sqrt{n}d_{\max}}\right\}$,
where $a_{\min}=\min_{i,j}\{a^i_{j}~|~a^i_{j}>0\}$, $d_{\min}=\min_{i,j}\{d^i_{j}~|~d^i_{j}>0\}$  and $d_{\max}=\max_{i,j} d^i_{j}$. Similarly, we have $R_i\le na_{\max}+\Phi^{-1}(1-\alpha_i)\sqrt{n}d_{\max}$\add{, where $a_{\max}=\max_{i,j} a^i_{j}$}. It follows from the result in (B4) that if \add{$\frac{\max\{a_{\max},d_{\max}\}}{\min\{a_{\min},d_{\min}\}}$, $\Phi^{-1}(1-\alpha_i)$ and $(\Phi^{-1}(1-\alpha_i))^{-1}$ are bounded by quasi-polynomials} in $n$, then all minimal feasible solutions of the system~\raf{kp4} can be enumerated in quasi-polynomial time.

%

\paragraph{Quantum hypergraph covers.}
A quantum hypergraph~\cite{AW02,WX08} is a pair \add{$\cH=(\cV,\cE)$, where $\cV\subseteq\CC^d$ is a $d$-dimensional Hilbert space and each element $j\in \cE:=[n]$ is identified with a Hermitian operator $A_j$ over $\cV$ satisfying $0\preceq A_j\preceq \bI_d$, with $\bI_d$ denoting the $d$-dimensional identity operator over $\cV$ and "$\succeq$`` denoting the {\it L\"owner} (partial) order on Hermitian operators:  $A\succeq B$ if and only if $A-B$ is positive semidefinite. A {\it cover} of $\cH$ is a subset $\cE'\subseteq\cE$ such that $\sum_{j\in\cE'}A_j\succeq \bI_d$.}  This notion arises in the area of quantum information theory~\cite{AW02}. Note that a minimal quantum hypergraph cover is a minimal  feasible solution for the inequality: 
	\begin{alignat}{3}
\label{sdp-cover}
\quad & \displaystyle \sum_{j=1}^nA_jx_j\succeq ~\bI_d,\\
\qquad &\bx\in\{0,1\}^n.\nonumber
\end{alignat}
Assuming feasibility of \raf{sdp-cover}, we must have $T:=\sum_{j=1}^nA_j-\bI_d\succeq0$. It follows then that the minimal quantum hypergraph covers are in one-to-one correspondence with the maximal feasible binary solutions for the inequality $\sum_{j=1}^nA_jx_j\preceq T$, which is of the form considered in (B8). It is not difficult to see that the result in (B8) can  be extended to the case when the matrices $A^{i,j}$ are Hermitian positive semidefinite, while only increasing the  dual bound by a factor of at most $2$ in the exponent (that is, we get $q(\cL,|\cY|)\le O(n)^{4d+1}r|\cY|$). We conclude that, if the dimension $d$ is fixed, then all minimal covers for a quantum hypergraph can be enumerated in quasi-polynomial time.

\section{Supermodular Inequalities}\label{sec:supermodular}

Given a function $f:\cC\to\RR_+$, one can check if $f$ is supermodular using  the following statement, generalizing the well-known characterization of supermodular functions over the Boolean cube \cite{L83}.

\begin{proposition}\label{p1}
	A function $f:\cC\to\RR$ is supermodular if and only if, for any $j\in[n]$, for any $z\in\cC_j\setminus\{c_j\}$, and
	for any $\bx\in\cC_1\times\ldots\times\cC_{j-1}\times\{z\}\times\cC_{j+1}\times\ldots\times\cC_k$,
	the difference
	\begin{align*}
		\partial_f(\bx,j,z) :=f(\bx +\b1^j)-f(\bx),
	\end{align*}
	is monotone in $\bx$.
\end{proposition}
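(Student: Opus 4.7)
I would prove the two directions separately. The forward direction is an immediate one-line consequence of the supermodular inequality, while the reverse direction requires a telescoping argument along parallel monotone paths.

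\emph{Necessity ($\Rightarrow$).} Fix $j$, $z$, and two vectors $\bx\le\by$ in $\cC_1\times\cdots\times\{z\}\times\cdots\times\cC_n$. I apply the supermodular inequality~(3) to the pair $\bx+\b1^j$ and $\by$. Since $\bx\le\by$ and the $j$-th components are $z+1$ and $z$, a direct check gives $(\bx+\b1^j)\vee\by=\by+\b1^j$ and $(\bx+\b1^j)\wedge\by=\bx$. Substituting into (3) yields $f(\by+\b1^j)+f(\bx)\ge f(\bx+\b1^j)+f(\by)$, which rearranges to $\partial_f(\by,j,z)\ge\partial_f(\bx,j,z)$, as desired.

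\emph{Sufficiency ($\Leftarrow$).} Fix arbitrary $\bx,\by\in\cC$, set $\ba:=\bx\wedge\by$, $\bb:=\bx\vee\by$, and let $S=\{i\in[n]:y_i>x_i\}$. Observe that on coordinates in $S$ we have $a_i=x_i$ and $b_i=y_i$, while on coordinates outside $S$ we have $a_i=y_i$ and $b_i=x_i$. Enumerate the coordinates of $S$ arbitrarily as $j_1,\ldots,j_k$ and perform the two transitions $\bx\to\bb$ and $\ba\to\by$ by adding one unit at a time along the coordinates in $S$ in the same order. After any prefix of $L$ single-unit moves, call the intermediate points $\bx^{(L)}$ and $\by^{(L)}$. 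By construction they agree on every coordinate of $S$ (both have received the same increments), and on coordinates outside $S$ we still have $\bx^{(L)}_i=x_i\ge y_i=\by^{(L)}_i$. Hence $\bx^{(L)}\ge\by^{(L)}$ componentwise, and they coincide on whichever coordinate $j$ is about to be incremented next.

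I then apply the hypothesis to the forthcoming increment: since $\bx^{(L)}\ge\by^{(L)}$ and $x^{(L)}_j=y^{(L)}_j=:z$, the monotonicity of $\partial_f(\cdot,j,z)$ gives
\begin{equation*}
f(\bx^{(L)}+\b1^j)-f(\bx^{(L)})\ge f(\by^{(L)}+\b1^j)-f(\by^{(L)}).
\end{equation*}
Telescoping over all $\sum_{i\in S}(y_i-x_i)$ unit steps yields $f(\bb)-f(\bx)\ge f(\by)-f(\ba)$, which is exactly~(3).

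The only delicate point is the bookkeeping in the telescoping: one must set up the two paths so that at every unit step both paths are at equal height on the coordinate being incremented, while the $\bx$-path continues to dominate the $\ba$-path everywhere else. Once this synchronization is in place, the result is immediate from the single-variable marginal hypothesis, and no further obstacle arises.
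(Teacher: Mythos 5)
Your proof is correct, and it proves essentially the same statement by essentially the same mechanism as the paper. The necessity direction is identical (the paper writes $\bx'\vee\bone^j$ where it clearly means $\bx'+\b1^j$, which is what you use); for sufficiency the paper runs an explicit induction on $|\{i: x_i>y_i\}|$, telescoping one coordinate at a time and then recursing via the pair $(\bx,\by')$, whereas you set up two synchronized lattice paths and telescope in a single pass. Your version avoids the induction bookkeeping and the auxiliary identities $\bx\wedge\by+(x_j-y_j)\bone^j=\bx\wedge\by'$ and $\bx\vee\by'=\bx\vee\by$, at the small cost of having to verify the path invariant ($\bx^{(L)}\ge\by^{(L)}$ with equality on the coordinate about to be incremented); this is arguably a cleaner exposition of the same underlying telescoping idea rather than a genuinely different proof.
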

We include the proof in the appendix for completeness.
Define the  \add{``traction"} of $f$, denoted by $\tau(f)$, to be the minimum possible positive increase in $f$ corresponding to a minimal change in the variables \add{along one coordinate}:
\begin{align}\label{traction}
	\tau(f):=\min_{\stackrel{j\in[n],~\bx\in\cC,~x_j<c_j}{f(\bx+\b1^j)>f(\bx)}}f(\bx+\b1^j)-f(\bx).
\end{align}

Consider a monotone inequality~\raf{mon-s}, where the function $f$ is supermodular, and let  $\cF$ denote the family of all its maximal feasible vectors.
We can extend the result in \cite{BEGK-DAM03}  as follows.
\begin{theorem}\label{t1}
	Consider the inequality~\raf{mon-s} and suppose that $f:\cC\mapsto\RR_+$ is a monotone supermodular function with traction $\tau$ and maximum range $R:=f(\bc)$. Then for any subset $\cY\subseteq\cF$ of size $|\cY|\ge 2$, it holds that 
	\begin{align}\label{bd1}
	|\cI(\cY)\cap \cI(\cF)|\leq |\cY|^{o(\log\frac{R-t}{\tau})}.
	\end{align}
\end{theorem}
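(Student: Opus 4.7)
The plan is to adapt the recursive dual-bounding argument of~\cite{BEGK-DAM03}, originally designed for integer-valued supermodular functions with unit increments, to the real-valued setting by treating the traction $\tau$ as the effective quantisation step.

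Write $\cX := \cI(\cY) \cap \cI(\cF)$. Every $\bx \in \cX$ is infeasible, so $f(\bx) > t$, while for each coordinate $j$ with $x_j > 0$ the vector $\bx - \b1^j$ is dominated by some $\by \in \cY$ and hence $f(\bx - \b1^j) \le f(\by) \le t$. The definition~\raf{traction} of traction therefore forces $f(\bx) - f(\bx - \b1^j) \ge \tau$ whenever the difference is positive; together with Proposition~\ref{p1}, this furnishes the quantised form of the marginal-increment inequalities that drove the integer-valued argument of~\cite{BEGK-DAM03}.

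I would then let $D(\Delta, m)$ be the supremum of $|\cI(\cY) \cap \cI(\cF)|$ over monotone supermodular instances with $|\cY| \le m$ and $\lceil (f(\bc) - t)/\tau \rceil \le \Delta$, and aim for a Fredman--Khachiyan-style recurrence of the form
\[
D(\Delta, m) \le m^{1/\epsilon(m)} \cdot D\!\left(\lceil \Delta/2 \rceil, m\right),
\]
with $\epsilon(m) = \Theta(\log\log m / \log m)$, together with the base case $D(1, m) \le m$: when $\Delta \le 1$, the traction bound forces every positive coordinate-wise increment to already cross the threshold $t$, so each $\bx \in \cX$ is of the form $\by + \b1^j$ for some $\by \in \cY$ and $j \in [n]$, and antichain considerations cap the count in terms of $m$. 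Solving the recurrence then yields the claimed $|\cY|^{o(\log((R-t)/\tau))}$ bound. To establish the recurrence itself, I would pick a threshold value $v^*$ roughly half-way (in units of $\tau$) between $t$ and $f(\bc)$ and partition $\cX$ into ``close'' vectors with $f(\bx) \le v^*$ and ``far'' vectors with $f(\bx) > v^*$: the close vectors are precisely the minimal non-dominated family for the reduced-gap subinequality $f(\bx) \le v^*$, contributing $D(\lceil \Delta/2 \rceil, m)$; the far vectors are grouped by a witness $\by \in \cY$ (justified by $\bx \not\le \by$) and, for each $\by$, the supermodular inequality $f(\bx \vee \by) + f(\bx \wedge \by) \ge f(\bx) + f(\by)$ combined with the traction bound lets one realise that bucket as the minimal non-dominated family of a strictly smaller instance.

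The principal obstacles are (i) arranging the split so that each bucket of $\cX$ is \emph{literally} the minimal non-dominated family of a smaller monotone supermodular subinequality, so that the inductive hypothesis can be invoked cleanly, and (ii) faithfully replacing the implicit integrality used in~\cite{BEGK-DAM03} by the quantitative traction bound $\tau$ throughout---without losing multiplicative factors that would blow up in the continuous setting. The Fredman--Khachiyan choice of $\epsilon(m) \to 0$ as $m \to \infty$ is what ultimately sharpens what would otherwise be an $O(\log((R-t)/\tau))$ exponent down to the desired $o(\log((R-t)/\tau))$.
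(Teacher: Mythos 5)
Your blind attempt does not follow the paper's route and contains gaps that prevent it from working. The paper's argument is not a Fredman--Khachiyan-style divide-and-conquer on the value gap $\Delta = (R-t)/\tau$. Instead, it imports a separate combinatorial tool (Lemma~\ref{l3}, quoted from~\cite{BEGK-DAM03}): given any antichain $\cA$ and any $\cB\subseteq\cI(\cA)$, one can build a binary tree $\bT$ and a ``proper mapping'' $\phi:L(\bT)\to\cB$ with $|L(\bT)|\geq|\cB|^{1/o(\log|\cA|)}$. The whole point is then to bound $|L(\bT)|$ absolutely, by induction along the tree: the supermodular inequality $f(\bx^u\vee\bx^v)\geq f(\bx^u)+f(\bx^v)-f(\bx^u\wedge\bx^v)$ combined with the traction lower bound pushes $f(\bx^w)$ up by at least $\tau/2$ per additional leaf, so $|L(\bT)|\leq 2(R-t)/\tau + 1$. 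Chaining the two facts gives $|\cB|\leq(2\Delta+1)^{o(\log|\cY|)}$, which is the stated bound. Your plan never invokes any analogue of Lemma~\ref{l3}, and this is exactly where the $o(\cdot)$ in the exponent comes from; without it, you would at best recover an $O(\log\Delta)$ exponent.

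Two concrete problems with your recurrence plan. First, the recurrence $D(\Delta,m)\leq m^{1/\epsilon(m)}\cdot D(\lceil\Delta/2\rceil,m)$ with $\epsilon(m)=\Theta(\log\log m/\log m)$ unwinds to $D(\Delta,m)\leq m^{1+\Theta(\log\Delta\cdot\log m/\log\log m)}$, whose exponent is $\Theta(\log\Delta\cdot\log m/\log\log m)$ --- this is not $o(\log\Delta)$ for fixed $m$ (indeed it exceeds $\Omega(\log\Delta)$ once $m\geq 3$), so even if the recurrence held it would not yield the theorem. The Fredman--Khachiyan $\epsilon(m)\to 0$ trick works in their setting because the recursion branches on $m$ itself; you are keeping $m$ fixed and shrinking $\Delta$, which does not produce the same gain. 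Second, the claim that the ``close'' vectors (those $\bx\in\cX$ with $f(\bx)\leq v^*$) form the family $\cI(\cY)\cap\cI(\cF')$ for a subinequality with smaller gap is not correct as stated: for the subinequality $f(\bx)\leq v^*$ these vectors are \emph{feasible}, not minimal infeasible, so they are not ``literally the minimal non-dominated family of a smaller monotone supermodular subinequality.'' You flag this as an obstacle, but it is not a technicality to be smoothed over --- it is where the inductive structure breaks, and the paper sidesteps it entirely by splitting $\cX$ at the threshold $t+\tau/2$ (into $\cX'$ and $\cX''$) and then running the tree-leaf induction separately on each piece, which needs only the supermodular inequality and the traction bound at the internal nodes, not a recursive call to a smaller dualization instance.
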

\add{Before we prove Theorem~\ref{t1}, we need some preliminaries.} Given $\cY\subseteq\cF$ and $\cX:=\cI(\cY)\cap\cI(\cF)$,
	we follow the proof in \cite{BEGK-DAM03} by constructing a binary tree $\bT$, in which each leaf $l\in L(\bT)$ is mapped to an element $\bx^l\in\cX$, and each internal node $v$ is associated with the element $\bx^v=\bigvee_{l \in L(\bT(v))} \bx^l$; here, $\bT(v)$ denotes the binary sub-tree of $\bT$ rooted at node $v$, and $L(\bT')$ denotes the set of leaves of the subtree $\bT'$. Such a mapping $\phi: L(\bT)\to \cX$ is said to be {\it proper} if it assigns different elements to different leaves, and if $f(\bx^{u} \wedge\bx^{v})\le t$  whenever $u$ and $v$ are incomparable
	nodes of $\bT$ (that is, when the sub-trees $\bT(u)$ and $\bT(v)$ are disjoint). As shown in \cite{BEGK-DAM03}, a sufficiently large binary tree admitting a proper mapping can always be constructed:
	\begin{lemma}[Lemma 11 in \cite{BEGK-DAM03}]\label{l3}
		Let $\cA\subseteq \cC$ be an antichain of size $|\cA|\ge 2$ in an integral box $\cC$ and let $\cB\subseteq\cI(\cA)$. Then there exists a binary tree $\bT$ and a proper mapping $\phi: L(\bT)\to \cB$ such that $|L(\bT)| \geq |\cB|^{1/o(\log |\cA|)}.$ 
	\end{lemma}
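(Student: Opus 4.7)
The plan is to prove Lemma~\ref{l3} by constructing the binary tree $\bT$ recursively, in the Fredman--Khachiyan style that has become standard for integer-box dualization. Starting from $\cB$ at the root, at each recursive step I would partition the current subset $\cB' \subseteq \cB$ at a node $v$ into two parts $\cB'_0, \cB'_1$, associate to $v$ the join $\bx^v = \bigvee_{\bx \in \cB'} \bx$, and recurse on both parts. The leaves of the resulting tree are in one-to-one correspondence with a subset of $\cB$; the per-step imbalance and the number of recursive levels together control $|L(\bT)| \ge |\cB|^{1/o(\log |\cA|)}$.

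The crux is finding a splitting that is balanced \emph{and} preserves the proper property. I would use a double-counting argument on pairs $(\bx,\ba) \in \cB' \times \cA$: because $\bx \not\le \ba$ for every such pair, there is at least one coordinate $j$ with $x_j > a_j$, giving $|\cB'|\cdot|\cA|$ witness incidences distributed among the $\sum_j c_j$ coordinate/value pairs $(j,z)$. By pigeonhole there exist $j^* \in [n]$ and $z^* \in \{0,\ldots,c_{j^*}-1\}$ such that many $\bx \in \cB'$ have $x_{j^*} > z^*$ while many $\ba \in \cA$ have $a_{j^*} \le z^*$. I would split $\cB'_0 = \{\bx \in \cB' : x_{j^*} \le z^*\}$ and $\cB'_1 = \{\bx \in \cB' : x_{j^*} > z^*\}$.

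To verify the proper property, I would consider two incomparable nodes $u, v$ whose least common ancestor split some $\cB'$ at $(j^*, z^*)$: the leaves below $u$ all have $x_{j^*} \le z^*$, so $\bx^u_{j^*} \le z^*$, and dually $\bx^v_{j^*} > z^*$, whence $(\bx^u \wedge \bx^v)_{j^*} \le z^*$. To upgrade this single-coordinate inequality to a full coordinatewise domination $\bx^u \wedge \bx^v \le \ba^*$ for some $\ba^* \in \cA$ (which by monotonicity of $f$ yields $f(\bx^u \wedge \bx^v) \le f(\ba^*) \le t$), I would exploit the minimality property of $\cB \subseteq \cI(\cA)$: for every $\bx \in \cB$ and every $j$ with $x_j>0$, the vector $\bx - \b1^j$ is dominated by some element $\ba_{\bx,j} \in \cA$. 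Strengthening the above pigeonhole to find $(j^*, z^*)$ together with a \emph{single} witness $\ba^* \in \cA$ common to a large fraction of $\cB'$, and then restricting $\cB'_0, \cB'_1$ to the corresponding subfamilies, ensures that every leaf below the current node contributes a vector that agrees with $\ba^*$ off of coordinate $j^*$; the join and meet operations then preserve this, giving $\bx^u \wedge \bx^v \le \ba^*$.

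The main obstacle is the quantitative bookkeeping needed to reach the stated $1/o(\log|\cA|)$ exponent rather than the naive $1/O(\log|\cA|)$. A single pigeonhole step only guarantees a minority part of size $|\cB'|/\poly(n,|\cA|)$, which would yield depth $O(\log|\cA|\cdot\log|\cB|)$. To sharpen this I would use the iterated ``volume'' dichotomy of~\cite{FK96,BEGKM02-SICOMP}: either the chosen $(j^*,z^*,\ba^*)$ already produces a well-balanced split (so both $|\cB'_0|$ and $|\cB'_1|$ drop by a constant factor), or the split is skewed but then the restricted subfamily on the smaller side inherits a \emph{strictly smaller} effective antichain $\cA$ at the next level, so that the potential $\log|\cA|\cdot\log|\cB'|$ decreases by a constant in the skewed case. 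Unwinding this two-sided recursion yields tree depth $o(\log|\cA|)\cdot\log|\cB|$, and hence $|L(\bT)|\ge |\cB|^{1/o(\log|\cA|)}$, as required.
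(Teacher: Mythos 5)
Lemma~\ref{l3} is quoted by the paper from \cite{BEGK-DAM03} (Lemma~11 there) without proof, so your proposal has to stand on its own, and its central step does not. The properness verification is the heart of the lemma, and your upgrade from the single-coordinate bound $(\bx^u\wedge\bx^v)_{j^*}\le z^*$ to full domination is self-defeating. If $\bx\in\cB$ has $\ba^*$ as its minimality witness at coordinate $j^*$, i.e.\ $\bx-\b1^{j^*}\le\ba^*$, then since $\bx\not\le\ba^*$ (because $\bx\in\cI(\cA)$) we must have $x_{j^*}=a^*_{j^*}+1$ and $x_{j'}\le a^*_{j'}$ for all $j'\ne j^*$. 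So every member of the restricted subfamily has the \emph{same} $j^*$-th coordinate: the split on $(j^*,z^*)$ is degenerate inside that subfamily (one side is empty for every threshold), and any later split along another coordinate produces sibling joins whose meet has $j^*$-th coordinate $a^*_{j^*}+1$, hence is not below $\ba^*$; nothing shows it is below any other member of $\cA$, and in general it need not be. For instance, if one child is a single leaf $\bx$ and the join of the other child's leaves dominates $\bx$ componentwise (quite possible: the join of several members of the antichain $\cI(\cA)$ can dominate another member), then the meet of the sibling joins is $\bx$ itself, which by definition of $\cI(\cA)$ is dominated by no element of $\cA$, so properness fails. Remember that properness is required at \emph{every} internal node; your mechanism at best handles the one node where the pigeonhole pair was chosen (and only if the restriction is applied to one side), and gives no way to continue the recursion below it.

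The quantitative half is also unsupported. The pigeonhole over the $\sum_j c_j$ pairs $(j,z)$ is far too coarse, since the $c_j$ are only guaranteed to have polynomial bit-length (one should count witnesses against the at most $n|\cA|$ pairs $(j,a_j)$ with $\ba\in\cA$); more seriously, the Fredman--Khachiyan-style potential argument you invoke rests on the claim that a skewed split leaves a ``strictly smaller effective antichain $\cA$'', but your construction never restricts $\cA$ at all, so the potential $\log|\cA|\cdot\log|\cB'|$ has no reason to decrease in the skewed case; and discarding all but a $1/\poly(n,|\cA|)$ fraction of $\cB'$ at each node to secure a common witness is never reconciled with the target $|L(\bT)|\ge|\cB|^{1/o(\log|\cA|)}$. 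Both the structural (properness-preserving partition at every node) and the counting parts therefore need genuinely different ideas, which is exactly what the cited proof in \cite{BEGK-DAM03} supplies.
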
	
		To prove the theorem, we consider a partition of $\cX=\cX'\cup\cX''$, where $\cX':=\{\bx\in\cX:~f(\bx)\ge t+\frac{\tau}{2}\}\subseteq\cI(\cY)$ and $\cX'':=\cX\setminus\cX'\subseteq\cI(\cY)$, and proper mappings $\phi': L(\bT')\to\cX'$ and $\phi'': L(\bT'')\to\cX''$\add{, defined respectively on two binary trees $\bT'$ and $\bT''$}, as guaranteed by Lemma~\ref{l3}.
	Consequently, the theorem \add{will follow} from the following extension of Lemma~1 in~\cite{BEGK-DAM03}.
	
	\begin{lemma}\label{l1}
		Given binary trees $\bT'$ and $\bT''$ and proper mappings $\phi'$ and $\phi''$ as above, we have 
		\begin{equation}\label{eqp1}
		|L(\bT')|\le \frac{2(R-t)}{\tau}\text{ and }|L(\bT'')|\le \frac{2(R-t)}{\tau}+1.
		\end{equation}
	\end{lemma}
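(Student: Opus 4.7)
My plan is to work with the quantity $g(v) := f(\bx^v) - t$ and to establish two recurrences that hold at every internal node $v$ with children $v_L, v_R$ of either $\bT'$ or $\bT''$. The first, weaker recurrence
\[ g(v) \ge g(v_L) + g(v_R) \]
follows from supermodularity applied to $\bx^{v_L}, \bx^{v_R}$ (note $\bx^{v_L} \vee \bx^{v_R} = \bx^v$) combined with the properness bound $f(\bx^{v_L} \wedge \bx^{v_R}) \le t$. The second, stronger recurrence
\[ g(v) \ge g(v_L) + \tau \quad \text{and} \quad g(v) \ge g(v_R) + \tau \]
requires three steps: first, $\bx^{v_L}$ and $\bx^{v_R}$ are incomparable as vectors, for otherwise $\bx^{v_L} \wedge \bx^{v_R}$ would equal one of them, forcing $f$ at an infeasible vector to be $\le t$, so $\bx^v > \bx^{v_L}$ strictly; second, one cannot have $f(\bx^v) = f(\bx^{v_L})$, since equality plus supermodularity would give $f(\bx^{v_R}) \le f(\bx^{v_L} \wedge \bx^{v_R}) \le t$, against infeasibility of the leaves below $v_R$; third, traversing from $\bx^{v_L}$ to $\bx^v$ by adding the missing unit vectors one coordinate at a time, Proposition~\ref{p1} makes every single-step difference nonnegative while the total is strictly positive, so at least one step contributes at least $\tau$ by the definition of traction.

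Given these recurrences, both bounds follow cleanly. For $\bT'$, iterating the first recurrence from the root down gives $g(r) \ge \sum_{l \in L(\bT')} g(l)$, and every leaf $l \in L(\bT')$ has $g(l) \ge \tau/2$ by the definition of $\cX'$; combined with $g(r) \le R - t$ this yields $|L(\bT')| \le 2(R-t)/\tau$. For $\bT''$ we only know $g(l) > 0$ at the leaves, so instead I would prove by induction on subtree size that every internal node $v$ of $\bT''$ satisfies $|L(\bT''(v))| \le 2 g(v)/\tau$. The base case (two leaf children) uses the second recurrence to get $g(v) \ge \tau$, matching $|L(\bT''(v))| = 2$. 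In the inductive step, if one child is a leaf and the other an internal node $w$, the bound $g(v) \ge g(w) + \tau$ absorbs the extra leaf (giving $1 + 2g(w)/\tau \le 2g(v)/\tau$); if both children are internal, the first recurrence together with the two inductive hypotheses closes the bound. Evaluating at the root gives $|L(\bT'')| \le 2(R-t)/\tau$ whenever $\bT''$ has at least two leaves, and the extra $+1$ only covers the trivial single-leaf case.

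The main obstacle I foresee is establishing the second recurrence. The first on its own yields only $g(r) \ge \sum_l g(l)$, which is useless for $\bT''$ because the individual leaf contributions can be arbitrarily small; what saves the argument is the strict $+\tau$ drop across every internal edge. Extracting this drop requires chaining vector-incomparability of the two children, a supermodular argument ruling out $f(\bx^v) = f(\bx^{v_L})$, and the definition of traction applied along a monotone lattice path from $\bx^{v_L}$ to $\bx^v$, all glued together by properness of the tree. Once the second recurrence is in hand, the remaining argument is a short case analysis.
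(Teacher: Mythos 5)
Your proof is correct, and for $\bT''$ it genuinely diverges from the paper's route. For $\bT'$ your iterated-recurrence argument is the paper's induction in disguise (both pass through $g(r)\ge\sum_{l}g(l)\ge\frac{\tau}{2}|L(\bT')|$), so there is no meaningful difference there.

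For $\bT''$ the two arguments differ in how the $\tau$-gap is extracted. The paper's induction maintains $f(\bx^w)\ge t-\frac{\tau}{2}+\frac{\tau}{2}|L(\bT''(w))|$ and, at an internal node with children $u,v$, sharpens the properness bound $f(\bx^u\wedge\bx^v)\le t$ to $f(\bx^u\wedge\bx^v)\le f(\bx^v-\b1^j)<t-\frac{\tau}{2}$ by invoking ``$\bx^v\in\cX''$''; but $\bx^v$ is the join of all leaf values in $\bT''(v)$, so it lies in $\cX''$ (and hence is a minimal infeasible vector with value below $t+\frac\tau2$) only when $v$ itself is a leaf. Your argument sidesteps this: you derive the stronger recurrence $g(v)\ge g(v_L)+\tau$ at \emph{every} internal node via a self-contained chain — incomparability of $\bx^{v_L},\bx^{v_R}$ (so $\bx^{v_L}<\bx^v$ strictly), supermodularity ruling out $f(\bx^v)=f(\bx^{v_L})$ (which would force $f(\bx^{v_R})\le t$), and the traction applied to a monotone coordinate path from $\bx^{v_L}$ to $\bx^v$ — and then close the induction with $|L(\bT''(v))|\le 2g(v)/\tau$ for internal $v$ by a short case split on whether the children are leaves. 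This version works for an arbitrary binary tree, including nodes whose two children are both internal, which the paper's phrasing does not obviously cover; as a bonus you get the marginally tighter bound $|L(\bT'')|\le 2(R-t)/\tau$ when the root is internal, needing the $+1$ only for the single-leaf degenerate case.

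One small slip: the nonnegativity of the single-step differences along the path from $\bx^{v_L}$ to $\bx^v$ is a consequence of the \emph{monotonicity} of $f$ (a hypothesis of Theorem~\ref{t1}), not of Proposition~\ref{p1}, which characterizes supermodularity via monotonicity of the discrete partial differences in $\bx$, not in $f$ itself. The correct attribution does not affect the argument.
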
 
	\begin{proof}
		Consider first the tree $\bT'$ and the mapping $\phi'$. We show by induction that 
		\begin{equation}\label{p1e1}
		f(\bx^w)\geq t+\frac{\tau}{2}|L(\bT'(w))|.
		\end{equation}
		holds for every node $w$ of the binary tree $\bT'$. Since $f(\bx^w)\le R$, it follows that
		\[
		|L(\bT'(w))|\leq \frac{2(R-t)}{\tau}
		\]
		which, when applied to the root of $\bT'$, proves the first part of the lemma. To see
		\raf{p1e1}, let us apply (backward) induction on the level of the node $w$ in $\bT'$.
		Clearly, if $w=l$ is a leaf of $\bT'$, then $|L(\bT'(l))|=1$, and \raf{p1e1} follows by the assumption that $\bx^l\in\cX'$.
		Let us assume now that $w$ is a node of $\bT'$ with $u$ and $v$ as its immediate successors.
		Then $|L(\bT'(w))|=|L(\bT'(u))|+|L(\bT'(v))|$, and $\bx^w=\bx^u\vee \bx^v$.
		By our inductive hypothesis, and since $f$ is supermodular and $f(\bx^u\wedge \bx^v)\le t$, we have
		the inequalities
		\[
		\begin{array}{rl}
		f(\bx^u\vee \bx^v)&\geq f(\bx^u)+f(\bx^v)-f(\bx^u\wedge \bx^v)\\&\geq
		t+\frac{\tau}{2}|L(\bT(u))|+t+\frac{\tau}{2}|L(\bT(v))|-t\\&=t+\frac{\tau}{2}|L(\bT(w))|.
		\end{array}
		\]
		Consider next the tree $\bT''$ and the mapping $\phi''$. 
		We prove by induction that 
		\begin{equation}\label{p1e2}
		f(\bx^w)\geq t-\frac{\tau}{2}+\frac{\tau}{2}|L(\bT''(w))|.
		\end{equation}
		holds for every node $w$ of the binary tree \add{$\bT''$}. If $w=l$ is a leaf node, then \raf{p1e2} holds as $|L(\bT''(l))|=1$, and  $f(\bx^l)>t$. If $w$ is a node of $\bT''$ with as immediate successors $u$ and $v$, then (as $f(\bx^u)>t$ and $f(\bx^v)>t$ while $f(\bx^u\wedge\bx^v)\le t$), there must exist a $j\in[n]$ such that $\bx^u\wedge\bx^v\le \bx^v-\b1^j$. The definition of $\tau$ and the fact that $\bx^v\in\cX''$ imply that $f(\bx^v-\b1^j)\le f(\bx^v)-\tau<t+\frac\tau2-\tau=t-\frac\tau2$. It follows from this and the inductive hypothesis that
		\[
		\begin{array}{rl}
		f(\bx^u\vee \bx^v)&\geq f(\bx^u)+f(\bx^v)-f(\bx^u\wedge \bx^v)\\&\geq
		f(\bx^u)+f(\bx^v)-f(\bx^v-\b1^j)\\&\geq 
		t-\frac\tau2+\frac{\tau}{2}|L(\bT(u))|+t-\frac\tau2+\frac{\tau}{2}|L(\bT(v))|-(t\add{-}\frac\tau2)\\&=t-\frac\tau2+\frac{\tau}{2}|L(\bT(w))|.
		\end{array}
		\]
		Applying~\raf{p1e2} to the root of $\bT''$ establishes the second part of the lemma.
	\end{proof}
\add{\begin{proof}[Proof of Theorem~\ref{t1}]
	It follows from the above two lemmas that 
	\begin{align*}
	|\cI(\cY)\cap\cI(\cF)|&=|\cX'|+|\cX''|\le |L(\bT')|^{o(\log |\cY|)}+|L(\bT'')|^{o(\log |\cY|)}\\ & \le 2\left(\frac{2(R-t)}{\tau}+1\right)^{o(\log |\cY|)}.
	\end{align*}
\end{proof}}

\add{Note that when $f$ is integer-valued and
$\tau \ge 1$, Theorem~\ref{t1} implies the result in (B1).}
\section{Products of Affine Functions}\label{sec:prodcut}
Let $p_1,\ldots,p_m:\cC\to\QQ_+$ be affine functions given in the form: $p_k(\bx)=(\ba^k)^\top\bx+a_0^k$, where $\ba^k\in\QQ_+^n$ are given vectors and $a_k^0\in\QQ_+$ are given numbers.  Given $t\in\QQ_+$, we are interested in enumerating the family $\cG$ of all {\it minimal feasible} vectors for the inequality $g(\bx)\ge t$ over $\bx\in\ZZ_+$, where $g(\bx):=\prod_{k=1}^mp_k(\bx)$. 
We may assume without loss of generality that $\ba^k_j,t\in\ZZ_+$ for all $k,j$.
Although $g(\bx)$ defines a polynomial, we cannot (directly) apply the result in  (B6) since the number of terms $s$ (and possibly also the maximum degree $d$) is exponential in $m$. Instead, we consider the function \begin{align}\label{aux}
	f(\bx):=R-\sum_{k=1}^m\log \overline{p}_k(\bc-\bx),
	\end{align} where $R:=\sum_{k=1}^m\log \overline p_k(\bc)$, $\overline{p}_k(\bx):=p_k(\bx)+\epsilon$, and \begin{align}\label{eps}
	\epsilon:=\frac{1}{2m(1+\max_kp_k(\bc))^{m-1}}
\end{align}
is a sufficiently small perturbation parameter that is needed to ensure that the range of the function $f$ remains bounded (which might fail to hold without perturbation when $\bx=\bc$). 
It is not difficult to see that the family $\cF$ of maximal feasible solutions for the inequality $f(\bx)\le t':=R-\log t$ is in one-to-one correspondence with $\cG$. 
Indeed, given $\bx$ such $g(\bc-\bx)\ge t$, it is immediate that $R-\sum_{k=1}^m\log \overline{p}_k(\bc-\bx)\le t'$, as $\overline p_k(\bc-\bx)\ge p_k(\bc-\bx)$. Conversely, given $\bx$ such that $f(\bx)\le t'$, it holds that $\prod_{k=1}^m\overline p_k(\bc-\bx)\ge t$. As 
\begin{align*}
	\prod_{k=1}^m\overline p_k(\bc-\bx)&=\prod_{k=1}^m p_k(\bc-\bx)+\epsilon\cdot\sum_{S\subseteq[m]~:~|S|<m}\epsilon^{m-1-|S|}\prod_{k\in S}p_k(\bc-\bx)\\
	&\le \prod_{k=1}^m p_k(\bc-\bx)+\epsilon\cdot m(\epsilon+\max_kp_k(\bc))^{m-1},
\end{align*}
we get by our choice~\raf{eps} of $\epsilon$ that $\prod_{k=1}^m p_k(\bc-\bx)\ge t-\frac12$, which in turn implies that $g(\bc-\bx)\ge t$ by the integrality assumption.

To enumerate the elements of $\cF$ in quasi-polynomial time, it would be enough by (B4) to show that the function $f$ is supermodular and to bound both the inverse of the traction $\tau$ of $f$ and the difference $f(\bc)-t'$ by polynomials in the input size. To see that $f$ is supermodular, we apply Proposition~\ref{p1}. For any $\bx\in\cC$ with $x_j<c_j$, we have
\begin{align}\label{e16}
	f(\bx+\b1^j)-f(\bx)=\sum_k\log\frac{\overline{p}_k(\bc-\bx)}{\overline p_k(\bc-\bx-\b1^j)}=\add{\sum_k}\log\left(1+\frac{a_j^k}{p_k(\bc-\bx-\b1^j)+\epsilon}\right),
\end{align}
which is monotone increasing in $\bx$. 

It follows also from \raf{e16} that the traction of $f$ can be bounded from below by
\begin{align*}
	\min_{k,~j~|~a_j^k>0}\log\left(\frac{a_j^k}{(\ba^k)^\top\bc+a_0^k+\epsilon}\right)\add{=\Omega\left(\frac1\cL\right)},
\end{align*}
where $\cL$ is the total encoding length of the coefficients of the given functions. It remains to bound the difference $f(\bc)-t'$, which can be done as follows:
\begin{align*}
f(\bc)-t'&=\log t-\sum_k\log (p_k(\bzero)+\epsilon)\le\log{t}+m\log\left(\frac1\epsilon\right)\\&\le \log t+m^2\log(2m(1+\max_kp_k(\bc)))=O(\cL^2).
\end{align*}

Thus, we arrive at the following result.
\begin{corollary}\label{c3}
	Consider an inequality $\prod_{k=1}^mp_k(\bx)\ge t$ where each $p_k:\cC\to\QQ_+$ is an affine function. Then for any subset $\cX\subseteq \cG$ of the family $\cG$ of minimal feasible solutions of size $|\cG'|\ge 2$, it holds that \add{\footnote{\add{Recall that, for any $\cB\subseteq\cC$, $\cI^{-1}(\cB)$ denotes the family of all maximal non-dominating elements of $\cB$ in $\cC$.}}}
	\begin{align}\label{bd1--}
		|\cI^{-1}(\cX)\cap \cI^{-1}(\cG)|\leq |\cG'|^{o(\log\cL)},
	\end{align}
where $\cL$ is the total encoding length of the coefficients of the given functions.
\end{corollary}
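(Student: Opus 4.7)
The plan is to invoke Theorem~\ref{t1} on the auxiliary function $f$ introduced in~\raf{aux} and transport the resulting bound through the coordinate-reversing bijection $\bx \leftrightarrow \bc - \bx$ between $\cG$ and the family $\cF$ of maximal feasible solutions of $f(\by) \le t' := R - \log t$ that was established above.

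The preceding discussion has essentially supplied every ingredient required to apply Theorem~\ref{t1}: supermodularity of $f$ follows from Proposition~\ref{p1} combined with the monotonicity of the coordinate-wise difference displayed in~\raf{e16}; the traction is bounded below by $\tau = \Omega(1/\cL)$; and the range is bounded above by $f(\bc) - t' = O(\cL^2)$. Feeding these two estimates into the bound~\raf{bd1} of Theorem~\ref{t1} gives, for any subfamily $\cY \subseteq \cF$ of size $|\cY| \ge 2$,
\[
|\cI(\cY) \cap \cI(\cF)| \le |\cY|^{o(\log((R - t')/\tau))} = |\cY|^{o(\log \cL^3)} = |\cY|^{o(\log \cL)},
\]
which matches the form of the target bound.

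It remains only to transfer this inequality across the map $\bx \mapsto \bc - \bx$. This map reverses the componentwise partial order on $\cC$ and hence intertwines the operator $\cI$ acting on $\cF$ with the operator $\cI^{-1}$ acting on $\cG$: minimal non-dominated elements on one side correspond to maximal non-dominating elements on the other, and antichains remain antichains. Letting $\cY$ denote the image of $\cX$ under this reversal, the correspondence gives $|\cI^{-1}(\cX) \cap \cI^{-1}(\cG)| = |\cI(\cY) \cap \cI(\cF)| \le |\cX|^{o(\log \cL)}$, which is the stated bound.

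I do not foresee a serious obstacle, since the substantive analytical work—verifying supermodularity of the log-transformed complement, controlling the perturbation $\epsilon$ so that the integrality-based bijection between $\cG$ and $\cF$ is preserved, and quantifying the traction and range in terms of $\cL$—has already been carried out in the exposition preceding the corollary. The only point that deserves explicit verification is the intertwining of $\cI$ on the $\by$-side with $\cI^{-1}$ on the $\bx$-side, but this is immediate from $\bx \le \by \iff \bc - \bx \ge \bc - \by$.
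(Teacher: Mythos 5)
Your proposal is correct and takes essentially the same route as the paper: the corollary is exactly the exposition preceding it (define $f$ via~\raf{aux}, verify supermodularity by Proposition~\ref{p1} and~\raf{e16}, bound the traction by $\Omega(1/\cL)$ and $f(\bc)-t'$ by $O(\cL^2)$, then apply Theorem~\ref{t1}), with the final order-reversing transfer $\bx \mapsto \bc - \bx$ between $(\cG,\cI^{-1})$ and $(\cF,\cI)$ left implicit in the paper but spelled out by you. No gaps; the computation $(R-t')/\tau = O(\cL^3)$ yielding $|\cY|^{o(\log\cL)}$ and the intertwining of $\cI$ with $\cI^{-1}$ under complementation are both valid.
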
	
	
	\add{Note that the introduction of the function~\raf{aux} is merely for the purpose of proving dual-boundedness of the family $\cG$, but is not actually used it the enumeration algorithm.} 
\medskip

It is worth noting that the enumeration  of minimal feasible solutions for a single inequality involving a product of linear functions over binary vectors is as hard as (and hence polynomially equivalent to) the well known {\it hypergraph transversal problem} for which the best currently known algorithm is quasi-polynomial~\cite{FK96}.  Indeed minimal transversals of a given hypergraph $\cH\subseteq 2^{[n]}$ correspond (one-to-one) to the minimal feasible solutions of the inequality
$
\prod_{H\in\cH}\sum_{i\in H}x_i\ge 1, 
$
over $\bx\in\{0,1\}^n$.
\section{Separable Monotone Functions and Polynomial Inequalities}\label{sec:polynomial}

A monotone function $f:\cC\mapsto\RR_+$ is said to be {\it separable} if $f(\bx)$ can be written as the product of single-variable non-negative monotone functions. For instance, a polynomial function of the form \raf{poly} is the sum of separable monotone functions. A single-variable function $g:\add{\{0,1,\ldots,c\}}\to\RR_+$ is said to be {\it discretely convex} if for all $x,y\in\add{\{0,1,\ldots,c\}}$ and $\lambda\in[0,1]$,
\begin{align*}
	\lambda g(x)+(1-\lambda) g(y) \ge \min_{u\in\add{\{0,1,\ldots,c\}}:~|\lambda x+(1-\lambda)y-u|\le 1}g(u).
\end{align*}
A sufficient and necessary condition for discrete  convexity is that (see, e.g., \cite{Y02})
the difference $\partial_g(x)=g(x+1)-g(x)$ is monotone in $x\add{\in\{0,1,\ldots,c-1\}}$. 

Let $f:\cC\to\RR_+$ be the sum of separable monotone functions, that is, 
\begin{align}\label{sep}
f(\bx)=\sum_{H\in\cH}a_H\prod_{j\in H}f_j^H(x_j),\quad\text{ for $\bx\in\cC$},
\end{align}
 where $\cH\subseteq 2^{[n]}$ is a given multiset family (repetitions allowed),  $f_j^H:\ZZ_+\mapsto\RR_+$ are non-negative monotone functions and $a_H>0$ for all $H\in\cH$. 
 
\medskip
	 
Applying Proposition~\ref{p1} to the function $f$ defined in \raf{sep}, we see that a sufficient condition for $f$ to be supermodular is that the difference function \add{$\partial_{f_j^H}(x_j)$} is monotone in $x_j$, or equivalently, $f_j^H$ is discretely convex, for all $j\in[n]$, as
\begin{align*}
	\partial_f(\bx,j,z)=\sum_{H\in\cH~|~j\in H}a_H\big(f_j^H(x_j+1)-f_j^H(x_j)\big)\prod_{j'\in H\setminus\{j\}}f_{j'}^H(x_{j'})
\end{align*}
is monotone whenever \add{$\partial_{f_j^H}(x_j)=f_j(x_j+1)-f_j(x_j)$} is monotone in $x_j$. As a corollary of this and the bound mentioned in (B1) above, we obtain the following result.
\begin{proposition}\label{p2}
Consider the inequality~\raf{mon-s} and suppose that the function $f:\cC\mapsto\RR_+$ has traction $\tau$ and maximum value $R:=f(\bc)\ge t$, and is the sum of discretely convex separable monotone functions of the form~\raf{sep}, defined by a multiset family $\cH$. Then for any $\cY\subseteq\cF$ of size $|\cY|\ge 2$, it holds that 
\begin{align}\label{bd1-}
|\cI(\cY)\cap \cI(\cF)|\leq |\cY|^{o(\log \frac{R-t}{\tau})}.
\end{align}
\end{proposition}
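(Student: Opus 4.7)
The plan is to reduce Proposition~\ref{p2} directly to Theorem~\ref{t1}. Since Theorem~\ref{t1} gives the bound $|\cI(\cY)\cap\cI(\cF)|\le |\cY|^{o(\log\frac{R-t}{\tau})}$ for any monotone supermodular function with maximum range $R$ and traction $\tau$, it suffices to verify that the function $f$ defined in~\raf{sep} is monotone and supermodular. Monotonicity is immediate from the fact that each $f_j^H$ is non-negative and monotone, the coefficients $a_H$ are positive, and the sum of products of non-negative monotone functions is monotone.

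For supermodularity, I would invoke Proposition~\ref{p1}: it is enough to show that the coordinate-wise difference $\partial_f(\bx,j,z)$ is monotone in $\bx$ for every $j\in[n]$ and every admissible $z$. The calculation
\begin{align*}
\partial_f(\bx,j,z)=\sum_{H\in\cH~|~j\in H}a_H\big(f_j^H(x_j+1)-f_j^H(x_j)\big)\prod_{j'\in H\setminus\{j\}}f_{j'}^H(x_{j'})
\end{align*}
displayed in the paragraph preceding the proposition already appears in the text. I would argue that each summand is monotone non-decreasing in $\bx$: the factor $a_H(f_j^H(x_j+1)-f_j^H(x_j))$ depends only on $x_j$ and is non-decreasing in $x_j$ precisely because $f_j^H$ is discretely convex (the equivalence between discrete convexity and monotonicity of the discrete derivative was stated in the separable/polynomial setup), while each of the remaining factors $f_{j'}^H(x_{j'})$ depends only on $x_{j'}$ and is monotone. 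Since the product of non-negative monotone functions of disjoint coordinate sets is monotone, and since $a_H>0$, the whole sum is monotone. By Proposition~\ref{p1}, $f$ is supermodular.

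With monotonicity and supermodularity of $f$ established, Theorem~\ref{t1} applies verbatim to the inequality~\raf{mon-s} with the same values of $\tau$ and $R$, yielding the claimed bound $|\cI(\cY)\cap \cI(\cF)|\le |\cY|^{o(\log\frac{R-t}{\tau})}$. I do not anticipate any real obstacle here: the monotonicity of the discrete derivative of a product of non-negative monotone functions (one of which has a monotone discrete derivative) is routine, and all of the machinery needed—Proposition~\ref{p1} for the characterization of supermodularity and Theorem~\ref{t1} for the dual bound in the real-valued supermodular case—is already in place. The only care required is to note that the notion of traction $\tau$ defined in~\raf{traction} and the maximum range $R=f(\bc)$ used in the proposition are exactly those required by Theorem~\ref{t1}, so no rescaling is needed.
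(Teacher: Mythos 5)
Your proposal is correct and follows exactly the paper's route: establish supermodularity of $f$ via Proposition~\ref{p1} using the displayed expression for $\partial_f(\bx,j,z)$ and the discrete convexity of each $f_j^H$, then invoke Theorem~\ref{t1} with the same $\tau$ and $R$. (The paper's text cites ``(B1)'' at this point, but the real-valued bound being applied is the one of Theorem~\ref{t1}/(B4), which is what you correctly use.)
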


When the function $f$ is integer-valued,
we can drop the requirement of discrete convexity. In fact, in this case, the bound in Proposition~\ref{p2} can be improved as follows.
\begin{theorem}\label{t1-}
	Consider the inequality~\raf{mon-s} and suppose that  $f:\cC\mapsto\{0,1,\ldots,R\}$ is the sum of separable monotone functions of the form~\raf{sep}, defined by a multiset family $\cH$, such that each $f_j^H:\cC_j\mapsto\{0,1,\ldots,R\}$ is monotone for all $j\in H\in\cH$ and $a_H\in\ZZ_+$ for all $H\in\cH$. Then for any non-empty subset $\cY\subseteq\cF$, it holds that 
	\begin{align}\label{bd2}
	|\cI(\cY)\cap \cI(\cF)|\leq (R-t)|\cY|.
	\end{align}

\end{theorem}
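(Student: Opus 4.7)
Let $\cX := \cI(\cY) \cap \cI(\cF)$. The plan is to refine the charging approach behind (B2) using the integer-valuedness of $f$ and its supermodularity (which follows from Proposition~\ref{p1}, since each product $\prod_{j'\in H}f_{j'}^H$ is supermodular and non-negative combinations of supermodular functions are supermodular).

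For each $\bx\in\cX$ and each $j$ with $x_j>0$, the minimality of $\bx$ in $\cI(\cY)$ gives a witness $\by(\bx,j)\in\cY$ with $\bx-\b1^j\le\by(\bx,j)$, and $\bx\not\le\by(\bx,j)$ forces $\by(\bx,j)_j=x_j-1$ and $\bx_{j'}\le\by(\bx,j)_{j'}$ for $j'\ne j$, so $\bx\le\by(\bx,j)+\b1^j$. For each $\by\in\cY$ define
\[
N(\by,j) := \{\bx\in\cI(\cF) : \bx\le\by+\b1^j,\ x_j=\by_j+1\}, \qquad N(\by) := \bigsqcup_{j}N(\by,j);
\]
the union is disjoint because $\bx\in N(\by,j)$ forces $x_{j'}\le\by_{j'}<\by_{j'}+1$ for $j'\ne j$, precluding $\bx\in N(\by,j')$. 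By the witness property, $\cX\subseteq\bigcup_{\by\in\cY}N(\by)$, hence $|\cX|\le\sum_{\by\in\cY}|N(\by)|$; the theorem will follow once we prove the pointwise bound $|N(\by)|\le R-t$ for each $\by\in\cY$.

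To prove this pointwise bound, enumerate $N(\by)=\{\bx_1,\ldots,\bx_k\}$ in an order for which $\bx_l\not\le\bigvee_{i<l}\bx_i$ for every $l\ge 2$; such an ordering exists by placing last any $\bx_i$ that is the sole element of its slice (since then $(\bigvee_{i'\ne i}\bx_{i'})_{j_i}\le\by_{j_i}<\bx_{i,j_i}$) and recursing. Along this ordering, at each step $l\ge 2$ the meet $(\bigvee_{i<l}\bx_i)\wedge\bx_l$ is strictly dominated by $\bx_l$ at some coordinate $j^*$ with $\bx_{l,j^*}>0$, so by minimal infeasibility of $\bx_l$ and integer-valuedness, $f((\bigvee_{i<l}\bx_i)\wedge\bx_l)\le f(\bx_l-\b1^{j^*})\le t$. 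Iterating the supermodular inequality $f(\bu\vee\bv)\ge f(\bu)+f(\bv)-f(\bu\wedge\bv)$ then yields
\[
f\bigl(\bigvee_{i=1}^{k}\bx_i\bigr)\ge\sum_{i=1}^{k}f(\bx_i)-(k-1)t\ge k(t+1)-(k-1)t = k+t,
\]
and since $f(\bigvee_i\bx_i)\le R$, we conclude $k\le R-t$. Summing over $\by\in\cY$ gives $|\cX|\le (R-t)|\cY|$.

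The main technical obstacle I anticipate is justifying the existence of the good ordering when several slices $N(\by,j)$ contain multiple elements: the recursive removal of singleton-slice elements must be shown never to strand the remaining set in a ``self-covering'' configuration, which I expect can be verified by a structural argument exploiting the spike characterisation $\bx_{i,j_i}=\by_{j_i}+1$ of each $\bx_i\in N(\by,j_i)$ above $\by$.
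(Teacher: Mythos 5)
Your approach is genuinely different from the paper's. The paper proves the bound by showing that $f$ has non\-negative M\"obius coefficients (Proposition~\ref{p4}, which uses the full separable-product structure of~\raf{sep} and is strictly stronger than supermodularity) and then invoking the weighted Intersection Lemma~\ref{int-lem}(ii) with $w=\hat f$ on $U=\cC$ and $\cT=\{\phi(\by):\by\in\cY\}$. You instead charge each $\bx\in\cX$ to a witness $\by\in\cY$ via the slices $N(\by,j)$, and then try to bound $|N(\by)|$ by a supermodular telescoping along a carefully chosen ordering. The witness set-up, the disjointness of the slices, and the telescoping step (given the ordering) are all correct. However, the ordering you need does not always exist, and you flag this yourself as the unresolved obstacle. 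It is a real gap, not a technicality.

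Concretely, take $n=4$, $\cC=\{0,1\}^4$, $f(\bx)=x_1x_3+x_1x_4+x_2x_3+x_2x_4=(x_1+x_2)(x_3+x_4)$, $t=0$. Then $\cF=\{(1,1,0,0),(0,0,1,1)\}$ and $\cI(\cF)=\{(1,0,1,0),(1,0,0,1),(0,1,1,0),(0,1,0,1)\}$. Taking $\cY=\cF$ so that $\cX=\cI(\cF)$, and $\by=(0,0,1,1)\in\cY$, one finds $N(\by,1)=\{(1,0,1,0),(1,0,0,1)\}$, $N(\by,2)=\{(0,1,1,0),(0,1,0,1)\}$, and $N(\by,3)=N(\by,4)=\emptyset$, so $N(\by)$ consists of all four minimal infeasible vectors. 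For every one of the four, the join of the other three is $(1,1,1,1)$, which dominates it; hence no element of $N(\by)$ can be placed last, and no ordering with $\bx_l\not\leq\bigvee_{i<l}\bx_i$ for all $l\geq2$ exists. Your singleton-slice removal heuristic never gets started here since both nonempty slices have two elements. (The claimed inequality $|N(\by)|\leq R-t$ happens to hold with equality in this example, $4=R-t$, but the proposed proof of it breaks.) The deeper reason your route is blocked is that the Intersection Lemma's covering hypothesis fails for the singleton family $\cT=\{\phi(\by)\}$: two elements $\bx,\bx'$ in the \emph{same} slice $N(\by,j)$ have $(\bx\wedge\bx')_j=y_j+1>y_j$, so $\bx\wedge\bx'\not\leq\by$. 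The paper avoids this by never trying to charge a whole $N(\by)$ to a single $\by$; it bounds $|\cX|$ globally against all of $\cY$ at once via the M\"obius-weighted Intersection Lemma. To salvage your argument you would need either a different per-$\by$ bounding device that tolerates self-covering configurations like the one above, or to abandon the pointwise bound and aggregate over $\cY$ as the paper does.
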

To prove Theorem~\ref{t1-}, we consider the class of functions having non-negative 
{\em M\"obius} coefficients. Recall that the 
M\"obius function $\mu:\cC\times\cC\mapsto\{-1,0,1\}$ is given by (see, e.g., \cite {BBMW88}):
\begin{equation}\label{ch3:mob}
\mu(\by,\bz)=\left\{
\begin{array}{ll}
(-1)^{|S|} & \mbox{ if \add{$\by=\bz-\b1^S$} for some $S\subseteq[n]$ }\\
0 & \mbox{ otherwise}
\end{array}
\right. 
\end{equation}
for $\by,\bz\in\cC$, where $\b1^S\in\{0,1\}^n$ is the vector having 
$\b1_j^S=1$ if and only if $j\in S$. 
Given a function $f:\cC\mapsto\RR_+$ and an $\bx\in\cC$, the {\em M\"obius inversion formula} 
enables us to express $f(\bx)$ as the sum of {\em M\"obius coefficients} $\hat{f}(\by)$ of all
elements $\by\leq\bx$:
\begin{equation}\label{ch3:mob-inv}
f(\bx)=\sum_{\bzero\leq \by\leq\bx}\hat{f}(\by)~\Longleftrightarrow~\hat{f}(\by)=\sum_{\bzero\le \bz\leq \by}f(\bz)\mu(\bz,\by).
\end{equation}

\begin{proposition}\label{p4}
	Suppose $f:\cC\mapsto\{0,1,\ldots,R\}$ is the sum of separable monotone functions: $f(\bx)=\sum_{H\in\cH}a_H\prod_{j\in H}f_j^H(x_j)$, where $f_j^H:\cC_j\mapsto\{0,1,\ldots,R\}$ is a monotone function and $a_H\ge0$ for all $H\in\cH$. Then $\hat f(\bx)\ge 0$ for all $\bx\in\cC$. 
\end{proposition}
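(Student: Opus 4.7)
The plan is to exploit the linearity of the Möbius inversion formula~\raf{ch3:mob-inv}. Since
\[
\hat f(\by) \;=\; \sum_{H\in\cH}a_H\, \hat{g^H}(\by),\qquad\text{where } g^H(\bx)=\prod_{j\in H}f_j^H(x_j),
\]
and $a_H\ge 0$ for every $H\in\cH$, it suffices to show that $\hat{g^H}(\by)\ge 0$ for each individual separable product $g^H$. Thus the whole argument reduces to analyzing one monotone separable term.

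For a fixed $H\in\cH$, I would substitute the explicit form of $\mu$ from~\raf{ch3:mob} into the right-hand side of~\raf{ch3:mob-inv}, obtaining
\[
\hat{g^H}(\by)\;=\;\sum_{S\subseteq\supp(\by)}(-1)^{|S|}\prod_{j\in H}f_j^H\bigl((\by-\b1^S)_j\bigr),
\]
where $S$ is restricted to $\supp(\by)$ so that $\by-\b1^S\in\cC$. The key step is to factorize this sum coordinate-wise. Extending $f_j^H\equiv 1$ for $j\notin H$ (a harmless convention, since $g^H$ does not depend on such $x_j$) and applying the elementary product--sum identity
\[
\sum_{S\subseteq T}(-1)^{|S|}\prod_{j\in T\setminus S}u_j\prod_{j\in S}v_j \;=\; \prod_{j\in T}(u_j-v_j)
\]
with $T=\supp(\by)$, $u_j=f_j^H(y_j)$, $v_j=f_j^H(y_j-1)$, yields a clean product representation of $\hat{g^H}(\by)$.

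From that factorization one reads off that $\hat{g^H}(\by)$ vanishes as soon as there exists some $j\in\supp(\by)\setminus H$ (the corresponding factor becomes $1-1=0$), and otherwise
\[
\hat{g^H}(\by) \;=\; \prod_{j\in H\setminus\supp(\by)}f_j^H(0)\;\cdot\;\prod_{j\in\supp(\by)}\bigl(f_j^H(y_j)-f_j^H(y_j-1)\bigr).
\]
Each factor on the right-hand side is non-negative---the first by non-negativity of $f_j^H$, the second by its monotonicity---so $\hat{g^H}(\by)\ge 0$, and combined with the linearity reduction this yields $\hat f(\by)\ge 0$.

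The main obstacle is essentially bookkeeping rather than any analytic subtlety: one must carefully restrict $S$ to $\supp(\by)$ so that $\by-\b1^S\ge\bzero$, handle the indices $j\notin H$ consistently under the extension $f_j^H\equiv 1$, and verify the product--sum identity before substituting. Once these routine details are checked, non-negativity is forced by the sign of the single-coordinate differences $f_j^H(y_j)-f_j^H(y_j-1)$.
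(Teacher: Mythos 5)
Your proof is correct and follows essentially the same route as the paper: reduce via linearity of the Möbius transform to a single separable term, then compute its Möbius coefficient as a product of single-coordinate increments $f_j^H(y_j)-f_j^H(y_j-1)$ and invoke monotonicity. The only cosmetic difference is that the paper handles the boundary via the convention $f_j^H(-1):=0$, whereas you restrict $S\subseteq\supp(\by)$ and extend $f_j^H\equiv 1$ off $H$; both are equivalent bookkeeping choices.
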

\begin{proof}
	Consider a term $g_H(\bx):=\prod_{j\in H}f_j^H(x_j)$. \add{If $x_j>0$ for some $j\not\in H$, we have $\hat g_H(\bx)=0$. Otherwise,}
$$
\hat{g}_H(\bx)=\sum_{S\subseteq H}(-1)^{|S|}g_H(\bx-\b1^S)=\sum_{S\subseteq H}(-1)^{|S|}
\prod_{j\in H} f_j^H(\bx-\b1^S)=\prod_{j\in H}[f_j^H(x_j)-f_j^H(x_j-1)],
$$ 
for any $\bx\in\cC$, where we assume $f_j^H(-1):=0$ for all $j\in H$. The non-negativity of $\hat{g}_H(\bx)$ follows form the monotonicity of $f_j^H$. By the linearity of the M\"obius transform, it follows then that $\hat f(\bx)=\sum_{H\in\cH_i}a_H\hat g_H(\bx)\ge 0$. 
\end{proof}

\add{Theorem~\ref{t1-} }follows from the following intersection inequality from
\cite{BGKM04}:
\begin{lemma}[Intersection Lemma \cite{BGKM04}]
	\label{int-lem}
	Let $\cS,\cT\subseteq 2^U$ be two families of subsets of a given set
	$U$, and $w:U \rightarrow\mathbb{R}_+$ be a given non-negative weight function
	on $U$. Suppose $\cS$ and $\cT$ are threshold separable, i.e., there are real thresholds $t_1<t_2$, such that $w(T)\leq t_1$, for all
	$T\in\cT$, and $w(S)\geq t_2$, for all $S\in\cS$, where $w(X)=\sum_{v\in X}w(v)$ for
	$X\subseteq U$. Suppose further that $|\cS|\geq 2$ and $\cT$ covers
	all pairwise intersections of $\cS$, i.e., for
	all $S,S'\in\cS$, $S\neq S'$, there exists a $T\in\cT$ such that
	$S\cap S'\subseteq T$. Then 
		\smallskip
	
	$(i)$~~   $|\cS| \leq \sum_{T \in \cT}|U\setminus T|$,
	
	\smallskip
	
	$(ii)$~  
$|\cS| \leq \frac{w(U)-t_1}{t_2-t_1}|\cT|$.
	
\end{lemma}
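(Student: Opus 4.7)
My strategy is to prove (ii) in detail by induction on $|\cS|$ and then sketch the parallel combinatorial argument for (i). The conceptual backbone for both is the same: the covering hypothesis forces the sets in $\cS$ to be ``spread out'' across the portion of $U$ not contained in any single witness $T\in\cT$, and threshold separability converts this spreading into the quantitative bound. The base case of (ii) is $|\cS|=2$: choosing $T\in\cT$ with $S_1\cap S_2\subseteq T$, inclusion--exclusion gives $w(U)\geq w(S_1\cup S_2)=w(S_1)+w(S_2)-w(S_1\cap S_2)\geq 2t_2-t_1$, so $|\cS|=2\leq\frac{w(U)-t_1}{t_2-t_1}\leq\frac{w(U)-t_1}{t_2-t_1}|\cT|$, which also matches the desired bound with $|\cT|\geq 1$.

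For the inductive step I fix any $S^*\in\cS$ and for each $S\in\cS\setminus\{S^*\}$ pick a witness $T(S)\in\cT$ with $S\cap S^*\subseteq T(S)$, partitioning $\cS\setminus\{S^*\}$ into classes $\cS_T^*=\{S:T(S)=T\}$. The key observation is that every $S\in\cS_T^*$ is disjoint from $S^*\setminus T$ (since $S\cap S^*\subseteq T$), so $\cS_T^*$ is supported on the restricted universe $U_T:=U\setminus(S^*\setminus T)$, whose weight satisfies $w(U_T)\leq w(U)-(t_2-t_1)$ because $w(S^*\setminus T)\geq w(S^*)-w(T)\geq t_2-t_1$. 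The restriction of $\cT$ to $U_T$ preserves both threshold separability (weights only decrease) and coverage of pairwise intersections of $\cS_T^*$, so the inductive hypothesis applies to each class on a strictly smaller universe.

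The main obstacle is that summing the inductive bound $|\cS_T^*|\leq\frac{w(U_T)-t_1}{t_2-t_1}|\cT|$ naively over $T\in\cT$ loses an extra multiplicative factor of $|\cT|$. To avoid this, I would strengthen the inductive step to peel off only a \emph{single} witness class at a time, choosing $T$ say to maximise $|\cS_T^*|$, so that each recursive call decreases the potential $(w(U)-t_1)|\cT|/(t_2-t_1)$ by exactly the number of $\cS$-elements it accounts for; equivalently, one may run a joint induction on the triple $(|\cS|,|\cT|,w(U))$ with the recursion telescoping a decrement of $t_2-t_1$ in $w(U)$ against one removed element of $\cS$. For (i), I would run the same reduction with the weighted decrement $w(S^*\setminus T)\geq t_2-t_1$ replaced by the combinatorial decrement $|S^*\setminus T|\geq 1$ (guaranteed since $w(S^*\setminus T)>0$), bounding $|\cS|$ against the budget $\sum_{T\in\cT}|U\setminus T|$: each recursive step removes at least one element of $U\setminus T$ from the universe, matching the unit of the bound charged to the chosen $T$.
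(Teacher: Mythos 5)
Your base case and the class decomposition relative to a fixed $S^*$ are fine (indeed $w(S^*\setminus T)\ge w(S^*)-w(T)\ge t_2-t_1$, so each class $\cS^*_T$ lives in a universe of weight at most $w(U)-(t_2-t_1)$), but the argument breaks exactly at the point you flag, and the proposed repair does not close it. Writing $\Phi:=\frac{w(U)-t_1}{t_2-t_1}|\cT|$, the inductive bound for a single class only gives $|\cS^*_{T_0}|\le \Phi-|\cT|$, while the remainder $\cS\setminus\cS^*_{T_0}$ still sits on the \emph{full} universe $U$ with the \emph{full} family $\cT$; so ``peeling one class at a time'' and applying the induction hypothesis to the rest returns the same budget $\Phi$ again, yielding only $|\cS|\le 2\Phi-|\cT|$, not $\Phi$. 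The potential does not decrease for the remaining sets because neither $U$ nor $\cT$ shrinks for them; the only way to make it decrease would be to delete $T_0$ from $\cT$, which is not permitted, since pairs in $\cS\setminus\cS^*_{T_0}$ may have their intersections covered only by $T_0$. The alternative accounting you suggest --- one unit of $|\cT|$ (equivalently a decrement $t_2-t_1$ of $w(U)$) per peeled class --- would require the peeled class itself to have size at most $\frac{w(U)-t_1}{t_2-t_1}$, which is false in general: the pairwise intersections \emph{inside} a class are covered by arbitrary members of $\cT$, not by its defining witness $T_0$, so a single class can be essentially as large as $\Phi-|\cT|$. As written, the inductive step is therefore a genuine gap, not a detail, and the same vagueness affects your sketch of part (i).

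For comparison, the paper's proof inducts on $|U|$ rather than on $|\cS|$: for an element $u$ lying in at least two sets of $\cS$, it deletes $u$ from the universe and shifts both thresholds to $t_1-w(u)$ and $t_2-w(u)$; this keeps $\frac{w(U)-t_1}{t_2-t_1}$ invariant, and since any witness of a pair of sets through $u$ must itself contain $u$, induction gives $|\cS(u)|\le\frac{w(U)-t_1}{t_2-t_1}|\cT(u)|$, where $\cS(u),\cT(u)$ denote the members containing $u$. Summing $w(u)|\cS(u)|$ over such $u$ and combining with $t_2|\cS|\le\sum_{S\in\cS}w(S)$ and $\sum_{T\in\cT}w(T)\le t_1|\cT|$ yields the bound after a short calculation. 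If you wish to keep a set-indexed recursion, you would need an analogous mechanism by which the removed part pays for itself (e.g.\ a threshold shift), which your sketch does not supply.
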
 

The proof of part (i) of Lemma~\ref{int-lem} was given in \cite{BGKM04}. The proof of part (ii) for the unweighted case (i.e., $w(u)=1$ for all $u\in U$) was given (for a weaker inequality) in \cite{BGKM00}. We include the proof of the weighted case of part (ii) in the appendix for completeness.

\bigskip

\begin{proof}[\add{Proof of Theorem~\ref{t1-}}]
Now to prove the theorem, we let $\cX\defeq\cI(\cY) \cap  \cI(\cF)$ and
consider the (one-to-one) monotonic mapping $\phi:\cC\mapsto 2^\cC$ 
defined by: $\phi(\bx)=\{\bz\in\cC~|~\bz\leq \bx\}$. Let $U:=\cC$, 
$\cS:=\{\phi(\bx)~|~\bx\in\cX\}$, and  $\cT:=\{\phi(\by)~|~\by\in\cY\}$. 
Thus with respect to the  non-negative weight function
$w\equiv\hat{f}:U\to\RR_+$, we obtain the threshold separability 
\begin{align}\label{e10}
\begin{array}{ll}
w(\phi(\bx))&=\sum_{\bzero\le\bz\le\bx}\hat f(\bz)=f(\bx)\geq t_2:=t+1,\mbox {for all  }\bx\in\cX;\\
w(\phi(\by))&=\sum_{\bzero\le\bz\le\by}\hat f(\bz)=f(\by)\leq t_1:=t,\mbox {for all  }\by\in\cY, 
\end{array}
\end{align}  
of $\cS$ and $\cT$. 
If $|\cX|=|\cS|=1$, then \add{as $f(\bc)\geq f(\bx)\geq  t+1$ for the element $\bx\in\cX$, we get $|\cX|=1\le f(\bc)-t\le (f(\bc)-t)|\cY|$}.
Let us assume therefore that $|\cS|\geq 2$, and observe that $\cT$ covers all pairwise intersections of $\cS$: for any two distinct elements $\bx,\bx'\in\cX$,
it follows by $\bx,\bx'\in\cI(\cY)$ that there is a $\by\in\cY$ such
that $\bx\wedge \bx' \leq \by$, and therefore, we get 
$$
\phi(\bx)\cap\phi(\bx')=\phi(\bx\wedge \bx')\subseteq\phi(\by).
$$
Now we apply Lemma~\ref{int-lem}(ii) to get   
$$
\left|\cI(\cY) \cap  \cI(\cF)\right|=|\cS| \leq \left(\sum_{\bzero\le\bz \le \add{\bc}}\hat{f}(\bz)-\add{t}\right)|\cT|=(f(\add{\bc})-t)|\cY|.
$$
\end{proof}
\add{Note that the restriction that $f$ is integer-valued was required in the proof of Theorem~\ref{t1-} only to gurantee that $t_2-t_1\ge 1$. One can easily see that we may drop this restriction and replace the right-hand side of the bound in \raf{bd2} by $\frac{R-t}{\kappa(f)}|\cY|$, where $\kappa(f):=\min_{\bx,\by\in\cC:~f(\bx)<f(\by)}(f(\by)-f(\bx))\le\tau(f)$. However, it is not clear if a bound of the form~\raf{bd2} with the right-hand side replaced by $\frac{R-t}{\tau(f)}|\cY|$ can be obtained.}

\medskip

Theorem~\ref{t1-} implies that, if the threshold $t$ is sufficiently close to the maximum value of $f$, then the enumeration problem can be solved in quasi-polynomial time.
In the other extreme case, when each $t=\poly(\cL)$ is sufficiently far form the maximum value, we can improve the bound in Theorem~\ref{t1-} (with slightly some more restrictions on the functions  $f_j^H$) as follows.
\begin{theorem}\label{t3}
		Consider the inequality~\raf{mon-s} and suppose that  $f:\cC\mapsto\{0,1,\ldots,R\}$ is the sum of separable monotone functions  of the form~\raf{sep}, defined by a (non-empty) multiset family $\cH$, such that each $f_j^H:\cC_j\mapsto\{0,1,\ldots,R\}$ is monotone with $f_j^H(0)=0$ and $f^H_j(1)\ge 1$, for all $j\in H\in\cH$ and $a_H\in\ZZ_+$ for all $H\in\cH$. Then for any non-empty subset $\cY\subseteq\cF$, it holds that 
	\begin{align}\label{bd2-}
		|\cI(\cY)\cap \cI(\cF)|\leq \left(|\cH|+\add{\big(1+\rho n\big)}\add{t|\cY|}\right) |\cY|,
	\end{align}
	where $\rho:=\max_{j\in H\in\cH,~x\in\cC_j\setminus\add{\{0,c_j\}}}\frac{\add{f_j^H(x+1)}}{f_j^H~(x)}$.
\end{theorem}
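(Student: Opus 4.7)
The proof builds on the Möbius-coefficient / Intersection-Lemma approach from Theorem~\ref{t1-}, decomposing $\cX := \cI(\cY) \cap \cI(\cF)$ into binary and non-binary parts, which are handled via different parts of Lemma~\ref{int-lem}. The key structural observation is that for each $\bx \in \cX$ and each $j \in \supp(\bx)$, minimality in $\cI(\cY)$ gives a witness $\by^j \in \cY$ with $\bx - \bone^j \le \by^j$, and necessarily $y^j_j = x_j - 1$ (the only coordinate where $\bx$ exceeds $\by^j$, since $\bx \not\le \by^j$). Hence, if $\bx$ is non-binary (i.e., $x_j \ge 2$ for some $j$), the corresponding witness has $y^j_j \ge 1$. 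Let $\cX_{\mathrm{bin}} := \cX \cap \{0,1\}^n$ and $\cX_{\mathrm{nb}} := \cX \setminus \cX_{\mathrm{bin}}$.

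For $\cX_{\mathrm{nb}}$, I apply Lemma~\ref{int-lem}(ii) with the Möbius weight $\hat f$ (non-negative by Proposition~\ref{p4}) on the localized universe $U := \{\bz \in \cC : \hat f(\bz) > 0,\ \bz \le \by + \bone^j \text{ for some } (\by, j) \in \cY \times [n] \text{ with } y_j \ge 1\}$ and the map $\phi(\bv) := \{\bz \in U : \bz \le \bv\}$. For each $\bx \in \cX_{\mathrm{nb}}$, the non-binary witness gives $\bx \le \by^j + \bone^j$ with $y^j_j \ge 1$, so $\phi(\bx) = \{\bz \le \bx : \hat f(\bz) > 0\}$ and $w(\phi(\bx)) = f(\bx) \ge t + 1$; for $\by \in \cY$, $w(\phi(\by)) = f(\by) \le t$. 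The pairwise intersection property $\phi(\bx) \cap \phi(\bx') \subseteq \phi(\by)$ follows from $\bx \wedge \bx' \le \by$ for some $\by \in \cY$. With $t_1 := t$, $t_2 := t+1$ (unit gap by integrality of $f$), Lemma~\ref{int-lem}(ii) yields $|\cX_{\mathrm{nb}}| \le (w(U) - t) |\cY|$. The weight $w(U)$ decomposes by whether an atom $\bz \in U$ is dominated by some $\by \in \cY$ (total weight $\le |\cY| t$) or lies in $\phi(\by + \bone^j) \setminus \phi(\by)$ for some $(\by, j)$ with $y_j \ge 1$; invoking the ratio bound $f_j^H(y_j + 1) \le \rho f_j^H(y_j)$ (valid when $y_j \ge 1$), combined with the product structure of each $g_H$, gives $\sum_{j : y_j \ge 1}[f(\by + \bone^j) - f(\by)] \le (\rho-1)\sum_H a_H |H| g_H(\by) \le \rho n f(\by)$; summing over $\by$ contributes at most $\rho n |\cY| t$. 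Hence $w(U) \le (1 + \rho n) t |\cY|$, giving $|\cX_{\mathrm{nb}}| \le (1 + \rho n) t |\cY|^2$.

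For $\cX_{\mathrm{bin}}$, I apply Lemma~\ref{int-lem}(i) with universe $U' := \cH$ and map $\phi'(\bv) := \{H \in \cH : H \subseteq \supp(\bv)\}$. The pairwise intersection property $\phi'(\bx) \cap \phi'(\bx') = \phi'(\bx \wedge \bx') \subseteq \phi'(\by)$ holds for some $\by \in \cY$ (since $\supp(\bx \wedge \bx') \subseteq \supp(\by)$), so Lemma~\ref{int-lem}(i) yields $|\phi'(\cX_{\mathrm{bin}})| \le \sum_{\by \in \cY} |\cH \setminus \phi'(\by)| \le |\cH| |\cY|$. I then show $\phi'$ is injective on $\cX_{\mathrm{bin}}$: if $\phi'(\bx) = \phi'(\bx')$ for distinct $\bx, \bx' \in \cX_{\mathrm{bin}}$ with $\supp(\bx) \ne \supp(\bx')$, then $f(\bone^{\supp(\bx) \cap \supp(\bx')}) = \sum_{H \in \phi'(\bx)} a_H \prod_{j \in H} f_j^H(1) = f(\bone^{\supp(\bx)}) \ge t + 1$, and choosing $j \in \supp(\bx) \setminus (\supp(\bx) \cap \supp(\bx'))$ (in either the incomparable case or the proper-containment case) yields $f(\bone^{\supp(\bx) \setminus \{j\}}) \ge t + 1$, contradicting the minimality of $\bx$ in $\cI(\cF)$. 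Combining, $|\cX| = |\cX_{\mathrm{bin}}| + |\cX_{\mathrm{nb}}| \le |\cH| |\cY| + (1 + \rho n) t |\cY|^2 = (|\cH| + (1 + \rho n) t |\cY|) |\cY|$. The main technical obstacle is the precise telescoping estimate $\sum_{j : y_j \ge 1} [f(\by + \bone^j) - f(\by)] \le \rho n f(\by)$ from the pointwise ratio bound $f_j^H(y_j+1) \le \rho f_j^H(y_j)$; this requires careful bookkeeping across the hyperedges $H \in \cH$ containing each coordinate $j$, and a minor technical point is ensuring injectivity of the map used in the Lemma~\ref{int-lem}(ii) application (resolved by tagging atoms of $U$ with identifiers of weight zero so that $\phi$ extends injectively over $\cX_{\mathrm{nb}}$ without altering $w(U)$).
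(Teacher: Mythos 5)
Your proof is correct and takes essentially the same route as the paper: split $\cX$ into binary and non-binary parts, apply the Intersection Lemma with the hyperedge-support map for the binary part, and apply it again with the M\"obius weight on a localized universe for the non-binary part, using the witness $\by^\bx$ with $y_j\ge 1$ to invoke the ratio bound $\rho$. The minor deviations (filtering the universe to atoms with $\hat f>0$ and the zero-weight tagging device for injectivity, a direct minimality argument for injectivity of $\phi'$ on $\cX_{\mathrm{bin}}$, using $\cY$ directly rather than $\cY_1=\{\by\wedge\bone\}$) are unnecessary but harmless; in fact injectivity in the non-binary application already follows from threshold separability plus pairwise covering, as the paper implicitly uses.
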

\begin{proof}
    Let \add{$\cX\defeq\cI(\cY) \cap  \cI(\cF)$}.
     We consider a partition of $\cX:=\cX_1\cup\cX_2$, where $\cX_1:=\cX\cap\{0,1\}^n$ and $\cX_2:=\cX\setminus\cX_1$, and define $\cY_1:=\{\by\wedge\bone~|~\by\in\cY\}$, where $\bone$ is the vector of all ones. We first bound the size of $\cX_1$. Let $U:=\cH$, $\cS:=\{\phi(\bx)~|~\bx\in\cX_1\}$, and  $\cT:=\{\phi(\by)~|~\by\in\cY_1\}$, where the monotonic mapping $\phi:\cC\to 2^{\cH}$ is defined by: $\phi(\bx):=\{H\in\cH~|~\bone^H\le\bx\}$\add{, with $\bone^H$ denoting the characteristic vector of $H\subseteq[n]$}. Note that our assumptions imply that,  for $\bx\in\{0,1\}^n$, 
	\begin{align*}
	f(\bx)&=\sum_{H\in\cH}a_H \prod_{j\in H}f_j^H(1)\prod_{j\in H}x_j
	=\sum_{H\in\cH~:~\bone^H\le \bx}a_H \prod_{j\in H}f_j^H(1).
	\end{align*} 
	Thus with respect to the  non-negative weight function
	$w:U\to\RR_+$, defined by $w(H):=\add{a_H\prod_{j\in H}f_j^H(1)}$ for $H\in\cH$, we obtain the threshold separability 
	\begin{align*} 
	w(\phi(\bx))&=\add{f(\bx)}\ge t+1,\mbox { for all  }x\in\cX_1;~~~~~w(\phi(\by))=\add{f(\by)}\leq t,\mbox {for all  }y\in\cY_1,
	\end{align*}  
	of $\cS$ and $\cT$. 
	Observe that $\cT$ covers all pairwise intersections of $\cS$: for any two distinct elements $\bx,\bx'\in\cX_1$,
	it follows by $\bx,\bx'\in\cI(\cY)\cap\{0,1\}^n$ that there is a $\by\in\cY_1$ such
	that $\bx\wedge \bx' \leq \by$, and therefore, we get 
	\begin{align}\label{e4}
	\phi(\bx)\cap\phi(\bx')=\phi(\bx\wedge \bx')\subseteq\phi(\by).
	\end{align}
	Threshold separability together with~\raf{e4} also implies that $|\cS|=|\cX_1|$.
	If $|\cS|=|\cX_1|=1$, then $|\cS|\le |\cH|\cdot|\cY|$ holds by the non-emptiness of $\cH$ and $\cY$. Let us assume therefore that $|\cS|\geq 2$, and 
	apply Lemma~\ref{int-lem}(i) to get    
	\begin{align}\label{e8}
	\left|\cX_1\right|=|\cS| \leq \sum_{\by \in \cY_1}|\cH\setminus \phi(\by)|\add{\le|\cH|\cdot|\cY_1|\le|\cH|\cdot|\cY|}.
	\end{align}
	
	Next, we will show that
	\begin{align}\label{e9}
	\left|\cX_2\right| \le \big(1+\add{\rho n\big)t|\cY|^2},
	\end{align}
	which together with \raf{e8} would imply the theorem. To see~\raf{e9}, we 
	consider the \add{(one-to-one) }monotonic mapping $\phi:\cC\mapsto 2^\cC$ 
	defined by: $\phi(\bx)=\{\bz\in\cC~|~\bz\leq \bx\}$. Let
	$\cS:=\{\phi(\bx)~|~\bx\in\cX_2\}$ and $\cT:=\{\phi(\by)~|~\by\in\cY\}$. By definition of $\cX_2$, for any $\bx\in\cX_2$, there exists a $j:=j^\bx\in[n]$ such that $x_j>1$. As $\cX_2\subseteq\add{\cI(\cY)}$, for any $\bx\in\cX_2$, there is a \add{$\by:=\by^\bx\in\cY$} satisfying $\bx-\b1^{j^\bx}\le \by$. It is important for the following argument to note that $y^\bx_{j}\ge 1$ (and hence $f^H_j(y_j^\bx)\ge 1$), for any $\bx\in\cX_2$ and $j:=j^\bx$, as this	implies that, for $\by:=\by^\bx$,
	\begin{align}\label{ee-}
	f(\by+\b1^{j})&= f(\by)+\sum_{H\in\cH~|~j\in H}a_H\big(f_j^H(y_j+1)-f_j^H(y_j)\big)\prod_{j'\in H\setminus\{j\}}f_{j'}^H(y_{j'})\nonumber\\
	&= f(\by)+\sum_{H\in\cH~|~j\in H}a_H\frac{f_j^H(y_j+1)-f_j^H(y_j)}{f_j^H(y_j)}\prod_{j'\in H}f_{j'}^H(y_{j'})\nonumber\\
	&\le f(\by)+\add{(\rho-1)\sum_{H\in\cH~|~j\in H}a_H \prod_{j'\in H}f_{j'}^H(y_{j'})\le f(\by)+(\rho-1) f(\by)\le\rho t}.
	\end{align}
	Define $\cY_2:=\{\by^\bx+\bone^{j^\bx}~|~\bx\in\cX_2\}$, and  $U:=\{\bz\in\cC~|~\bz\le\by\text{ for 
	 }\by\in\cY\cup\cY_2\}$. The definition of $U$ and the monotonicity of the mapping $\phi$ imply that $\phi(\bx),\phi(\by)\subseteq U$ for all $\bx\in\cX_2$ and $\by\in\cY$. Note also that $|\cY_2|\le n|\cY|$.
	
	Now we proceed in a way similar to the proof of Theorem~\ref{t1-}. We apply Lemma~\ref{int-lem}(ii) using the non-negative weight function
	$w\equiv\hat{f}:U\to\RR_+$, and the threshold separability~\raf{e10} of $\cS$ and $\cT$ \add{(where $\cX$ is replaced by $\cX_2$)}.
	If $|\cX_2|=|\cS|=1$, then \raf{e9} holds trivially. Thus we may assume that $|\cS|\ge 2$, and observe again that $\cT$ covers all pairwise intersections of $\cS$. Applying the lemma and using~\raf{ee-}, we obtain
	\begin{align*}
	\left|\cX_2\right|&=|\cS| \leq \left(\sum_{\bz \in U}\hat{f}(\bz)-\add{t}\right)|\cT|\le\left(\sum_{\by\in\cY}\sum_{\bzero\le\bz \le \by}\hat{f}(\bz)+\sum_{\by\in\cY_2}\sum_{\bzero\le\bz \le \by}\hat{f}(\bz)-t\right)|\cT|\\
	&=\left(\sum_{\by\in\cY}{f}(\by)+\sum_{\by\in\cY_2}f(\by)-t\right)|\cT|\le
	\left(t|\cY|+\add{\rho t}|\cY_2|-t\right)|\cT|\\
	&\le \big((1+\add{\rho n)}|\cY|-1\big)t|\cY|,
	\end{align*}
	establishing \raf{e9}.
\end{proof}

Applying Theorems~\ref{t1-} and~\ref{t3} to a polynomial function of the form~\raf{poly} with $f_j^H(x_j):=x_j^{d_{H,j}}$ and noting that $\max_{x\in\cC_j\add{\setminus\{0,c_j \}}}\frac{\add{f_j^H(x+1)}}{f_j^H~(x)}\add{=}\max_{x\ge 1}\frac{(x+1)^{d_{H,j}}}{x^{d_{H,j}}}=2^{d_{H,j}}$, we arrive at the following result.

\begin{corollary}\label{c2} 
			Consider the inequality~\raf{mon-s} and suppose that the function $f:\cC\mapsto\{0,1,\ldots,R\}$ is a polynomial of the form~\raf{poly}, defined by a (non-empty) multiset family $\cH$, and coefficients $a_H\in\ZZ_+$ for $H\in\cH$. Then for any non-empty subset $\cY\subseteq\cF$, it holds that 
	\begin{align}\label{bd2--}
	|\cI(\cY)\cap \cI(\cF)|\leq \min\left\{R-t,|\cH|+\big(1+2^dn)\add{t|\cY|}\right\} |\cY|.
	\end{align}
	where $d:=\max_{j\in H\in\cH}d_{H,j}$.
\end{corollary}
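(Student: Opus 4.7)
The plan is to instantiate both Theorem~\ref{t1-} and Theorem~\ref{t3} on the polynomial $f(\bx)=\sum_{H\in\cH}a_H\prod_{j\in H}x_j^{d_{H,j}}$, viewing it as a sum of separable monotone functions of the form~\raf{sep} with the single-variable factor $f_j^H(x_j):=x_j^{d_{H,j}}$. The two theorems will produce the two expressions inside the $\min$ in~\raf{bd2--}, and taking their minimum will yield the claim.

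First I would verify the hypotheses of both theorems under this identification: the family $\cH$ is non-empty by assumption, each $f_j^H:\cC_j\to\{0,1,\ldots,R\}$ is monotone non-decreasing and integer-valued with $f_j^H(0)=0$ and $f_j^H(1)=1\ge 1$, the coefficients $a_H\in\ZZ_+$, and the total sum $f$ itself maps into $\{0,1,\ldots,R\}$. Theorem~\ref{t1-} then applies directly and yields $|\cI(\cY)\cap\cI(\cF)|\le (R-t)|\cY|$, which is the first of the two terms inside the minimum.

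Next I would evaluate the parameter $\rho=\max_{j\in H\in\cH,\,x\in\cC_j\setminus\{0,c_j\}}\frac{f_j^H(x+1)}{f_j^H(x)}$ appearing in Theorem~\ref{t3}. For $f_j^H(x)=x^{d_{H,j}}$, this ratio equals $\left(1+\tfrac{1}{x}\right)^{d_{H,j}}$, which is a decreasing function of $x$ on $\{1,\ldots,c_j-1\}$ and therefore attains its maximum at $x=1$ with value $2^{d_{H,j}}\le 2^d$. Hence $\rho\le 2^d$, and substituting into the bound of Theorem~\ref{t3} gives the second term $|\cH|+(1+2^d n)t|\cY|$ inside the minimum.

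No genuine obstacle presents itself: both steps are routine once the identification $f_j^H(x_j)=x_j^{d_{H,j}}$ is fixed. The only subtlety worth noting is that the exclusion of $x=0$ in the definition of $\rho$ in Theorem~\ref{t3} is precisely what allows $\rho$ to remain bounded by $2^d$, since otherwise the ratio would be undefined because $f_j^H(0)=0$; the hypothesis $f_j^H(0)=0$ needed for the polynomial decomposition thus dovetails with the convention in Theorem~\ref{t3}.
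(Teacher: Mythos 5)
Your proposal is correct and follows the same route as the paper: instantiate $f_j^H(x_j) := x_j^{d_{H,j}}$ in both Theorems~\ref{t1-} and~\ref{t3}, observe that $\rho = \max_{x\ge 1}(1+1/x)^{d_{H,j}} = 2^{d_{H,j}} \le 2^d$ (attained at $x=1$), and take the minimum of the two resulting bounds. The remark about excluding $x=0$ to keep $\rho$ finite is a correct reading of the interplay between the hypothesis $f_j^H(0)=0$ in Theorem~\ref{t3} and the range of the maximization defining $\rho$.
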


Note that, unlike the result in Corollary~\ref{c2}, the result in (B3) requires the {\it total} degree in each monomial to be bounded by a constant (but without any restriction  on $t$) to guarantee quasi-polynomial enumeration.  

\add{Corollary~\ref{c2} can be complemented } with the following negative result \add{from~\cite{BEGK02-MFCS}}, which shows that the polynomial dependence on \add{$R$ and }$|\cH|$ in the bound \raf{bd2--} is necessary. \add{The proof is included in the appendix for completeness.} 
\begin{proposition}[\add{Based on~\cite{BEGK02-MFCS}}]\label{ch1:unbdd}
	There exists a polynomially computable polynomial function $f:\{0,1\}\to\add{\ZZ_+}$\add{ of the form~\raf{poly}, with exponentially large $R=f(\bone)$ and $|\cH|$},
	for which problem  {\em GEN}$(\cY)$ of incrementally generating the maximal feasible solutions of the inequality $f(x)\le 0$ is NP-hard.
\end{proposition}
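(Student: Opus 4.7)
I would prove NP-hardness of problem GEN$(\cY)$ by reducing from 3-SAT. Given a 3-SAT formula $\phi$ over variables $z_1,\ldots,z_n$ with clauses $C_1,\ldots,C_m$, introduce $2n$ Boolean variables $x_1,\bar x_1,\ldots,x_n,\bar x_n$ (with intended reading ``$x_i=1$ asserts $z_i=1$'' and ``$\bar x_i=1$ asserts $z_i=0$'') and consider the base polynomial
\[
f_0(x,\bar x) \;=\; \sum_{i=1}^n x_i\bar x_i \;+\; \sum_{j=1}^m \prod_{i\in P_j}\bar x_i\prod_{i\in N_j}x_i,
\]
where $P_j,N_j\subseteq[n]$ are respectively the positive and negative literal index sets of clause $C_j$. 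This is a polynomial of the form~\raf{poly} with non-negative integer coefficients; the first sum penalizes inconsistent assertions and the second counts clauses that are already falsified by the asserted literals, so $f_0(x,\bar x)=0$ iff $(x,\bar x)$ encodes a consistent partial assignment that does not falsify any clause of $\phi$.

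To meet the requirement that both $R=f(\bone)$ and $|\cH|$ be \emph{exponentially} large while keeping $f$ polynomial-time evaluable, I would multiply by a strictly positive padding factor: $f(x,\bar x) := f_0(x,\bar x)\cdot\prod_{i=1}^n(1+x_i+\bar x_i)$. The padding polynomial is positive on $\{0,1\}^{2n}$, expands into $3^n$ monomials, and evaluates to $3^n$ at $\bone$; hence $f$ is still of the form~\raf{poly} with non-negative integer coefficients, has $R\ge 3^n$ and $|\cH|=\Omega(3^n)$, shares the same feasible set $\{f\le 0\}=\{f_0\le 0\}$, and is evaluable in $O(n+m)$ time by computing its two factors separately. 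After mild preprocessing of $\phi$ (e.g., removing unit clauses and adding dummy variables so that every consistent partial assignment can be extended to a total one), one then argues that the maximal feasible solutions of $f\le 0$ are in bijection with the total satisfying assignments of $\phi$: every total satisfying assignment is trivially maximal (flipping any $x_i$ or $\bar x_i$ either violates consistency or falsifies a clause), and the preprocessing rules out ``partial'' maximal solutions.

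With this bijection in hand, problem GEN$(\cY)$ with $\cY=\{\alpha\}$ consisting of a single known satisfying assignment becomes exactly the NP-hard ``another satisfying assignment'' problem, which itself reduces from 3-SAT by a standard padding trick (introduce a fresh variable $w$ and clauses making $(*,w=0)$ trivially satisfying while forcing $w=1$ to recover $\phi$). The main obstacle I expect is controlling the partial-maximal pathology in the bijection step: a consistent partial assignment $(x,\bar x)$ with $x_i=\bar x_i=0$ could be maximal if both extensions $x_i\to 1$ and $\bar x_i\to 1$ falsify some clause, and such partial maximals would spoil the correspondence with total satisfying assignments. Resolving this delicate point is likely what drives the proof in~\cite{BEGK02-MFCS}, and is the reason one might prefer to reduce from a more combinatorially rigid NP-hard problem (such as 3-DIMENSIONAL MATCHING or EXACT COVER BY 3-SETS) whose instances preclude partial maximals by their structure.
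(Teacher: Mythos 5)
Your proposal correctly identifies the shape of the argument (an NP-hardness reduction into the enumeration problem, with a separately handled padding to make $R$ and $|\cH|$ exponential), but it has a genuine gap at its core that you yourself flag: the bijection between the maximal feasible vectors of $f_0\le 0$ and the total satisfying assignments of $\phi$ does not hold in general. A consistent partial assignment with $x_i=\bar x_i=0$ can easily be maximal feasible — e.g.\ if some already-asserted literals make both extensions $x_i\to 1$ and $\bar x_i\to 1$ complete a falsifying monomial for two different clauses — and there is no ``mild preprocessing'' of a 3-SAT formula that guarantees this never happens. Ruling out these spurious partial maximals is the entire difficulty, and the proposal only gestures at it rather than resolving it, so the reduction as stated does not establish NP-hardness of GEN$(\cY)$.

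The paper sidesteps this problem completely by reducing not from 3-SAT but from minimal $s$-$g$ relay-cut enumeration (known to be NP-hard from~\cite{GK99}). It sets $f(\bx)=\sum_{P\in\cP}\prod_{e\in P}x_{j(e)}$ over all $s$-$g$ walks $P$ of length $|V|$, so that $f(\bx)\le 0$ iff there is no $s$-$g$ path in $G(\bx)$; the maximal feasible vectors are then exactly the complements of minimal relay cuts, with no exceptional cases. This reduction also supplies the exponential $R$ and $|\cH|$ for free via the walk count, and polynomial evaluability via powering the adjacency matrix, so no ad hoc padding factor is needed. Your instinct that one should reduce from a ``more combinatorially rigid'' problem is exactly what the paper does; if you want to salvage a SAT-style argument you would need a gadget that structurally precludes partial maximals, which is essentially as much work as just doing the relay-cut reduction.
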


Whether  a bound of the form~\raf{bd2--} that is independent of $t$ \add{ and $R$} (as in the Boolean case \cite{BGKM04}, where it is known that $|\cI(\cY)\cap \cI(\cF)|\leq |\cH|\cdot|\cY|$) exists,
remains an interesting open question.

\section{Second-order Cone Inequalities}\label{sec:soc}
For a vector $\bv\in\RR^n$, denote by $\|\bv\|$ the $\ell_2$ norm of $\bv$. Consider the monotone second-order cone \add{(SOC)} inequality:
\begin{alignat}{3}
	\label{socp}
	\quad & \displaystyle f(\bx):= \|A\bx\|+\bb^\top\bx\leq ~t,
\end{alignat} 
where $A\in\RR_+^{d\times n}$ and $\bb\in\RR_+^n$ are  given matrix and vector. In the special case when $\bb=\bzero$, we can derive the following bound using (B3) as \raf{socp} reduced to a quadratic inequality.

\begin{proposition}\label{p3}
	Let $\cF$ be the set of maximal feasible vectors for~\raf{socp}, where $\bb=\bzero$. Then, for any non-empty $\cY\subseteq\cF$, it holds that 
	\begin{align}\label{bd3}
		|\cI(\cY)\cap \cI(\cF)|\leq \add{n(n+1)}(2|\cY|+1)^2.
	\end{align}
\end{proposition}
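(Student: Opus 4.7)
The plan is to reduce the second-order cone inequality, in the case $\bb=\bzero$, to a polynomial inequality of the kind already handled by the bound in (B3), and then simply read off the resulting estimate.

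First I would observe that, since $A\in\RR_+^{d\times n}$ and $\bx\in\cC\subseteq\ZZ_+^n$, both sides of $\|A\bx\|\le t$ are non-negative, so squaring preserves the feasibility set. Hence $\cF$ coincides with the set of maximal feasible vectors of
\[
\bx^\top M\bx\le t^2,\qquad M:=A^\top A\in\RR_+^{n\times n},
\]
where $M$ has non-negative entries because $M_{ij}=\sum_{k}A_{ki}A_{kj}\ge 0$. Grouping diagonal and off-diagonal terms, this is the single inequality
\[
\sum_{i=1}^n M_{ii}\,x_i^2 \;+\; \sum_{1\le i<j\le n}2M_{ij}\,x_i\,x_j \;\le\; t^2,
\]
which fits the framework of (B3): it involves $r=1$ inequality, is a sum of $s=\binom{n+1}{2}=\frac{n(n+1)}{2}$ terms, and each term is a product of at most $p=2$ single-variable monotone functions of the $x_j$'s (namely $x_i^2$ or $x_i x_j$), with non-negative coefficients.

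Next I would invoke the bound stated in (B3), namely $|\cI(\cY)\cap\cI(\cF)|\le rsp(2|\cY|+1)^p$, with the parameters $r=1$, $s=\frac{n(n+1)}{2}$, $p=2$. This yields
\[
|\cI(\cY)\cap\cI(\cF)|\;\le\;1\cdot\frac{n(n+1)}{2}\cdot 2\cdot(2|\cY|+1)^2\;=\;n(n+1)\,(2|\cY|+1)^2,
\]
which is exactly the claimed bound.

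There is no real obstacle here: the non-negativity of $A$ and $\bx$ is what lets us square and obtain an \emph{inequality with non-negative coefficients}, which is the condition needed to fit into (B3). The only thing to be a little careful about is to verify that the set of maximal feasible vectors does not change when we pass from $\|A\bx\|\le t$ to $\|A\bx\|^2\le t^2$, which follows from both sides being non-negative. Note that the argument depends essentially on $\bb=\bzero$: with a non-zero $\bb$, squaring would introduce the cross term $2(\bb^\top\bx)\|A\bx\|$, which is no longer a polynomial with a bounded number of product-terms, and a separate treatment (as in the general second-order cone case of (B7)) is needed.
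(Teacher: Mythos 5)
Your proof is correct and follows essentially the same route as the paper: square the norm inequality (valid since both sides are non-negative), observe that $\bx^\top A^\top A\bx$ is a quadratic polynomial with non-negative coefficients consisting of $s=n(n+1)/2$ monomials each of degree at most $p=2$, and apply the bound (B3) with $r=1$ to obtain $n(n+1)(2|\cY|+1)^2$. The paper's brief proof is exactly this argument (with a minor typo writing $s=n(1+1)/2$ where $s=n(n+1)/2$ is intended).
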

\begin{proof}
	When $\bb=\bzero$, we can write the inequality in~\raf{socp} as $(f(\bx))^2\le t^2$, where $(f(\bx))^2=\bx^\top A^\top A\bx$ is a quadratic function with non-negative coefficients. It follows from (B3) (with \add{$s=n(1+1)/2$} and $p=2$) that $|\cI(\cY)\cap \cI(\cF)|\leq \add{n(n+1)}(2|\cY|+1)^2$.
\end{proof}
When $\bb\neq \bzero$, the above argument does not work\footnote{Indeed, squaring does not yield an equivalent problem as taking the square root results in two possibilities (e.g.,  consider $\sqrt{x_1+x_2}+2x_1\le 1$; squaring yields $x_1+x_2\le 1$ which is not an equivalent inequality). Moreover, squaring both sides of an inequality like~\raf{socp} may yield a term with a negative coefficient (indeed we get $\bx^\top A^\top A\bx+2t\bb^\top\bx-(\bb^\top\bx)^2\le t^2$), where the result in (B3) cannot be applied (e.g., consider $\sqrt{x_1+x_2}+x_1+x_2\le 2$; squaring yields $2x_2+2x_2-x_1x_2\le 2$ which is an equivalent inequality but with a negative coefficient).}. To bound the number of \add{minimal} infeasible vectors for \raf{socp}, we use a different argument based on a {\it semi-infinite linear} formulation of \raf{socp}. 

Denote by $\add{\BB^d_+}:=\{\bx\in\RR_+^d:~\|\bx\|\le 1\}$ the non-negative half of the $d$-dimensional unit ball centered at the origin. We can rewrite \raf{socp} in the following equivalent form:
\begin{alignat}{3}
\label{socp2}
\quad & \displaystyle f_{\bu}(\bx):= \bu^\top A\bx+\bb^\top\bx\leq ~t,\quad\text{ for }\bu\in\BB^{d}_+(0,1).
\end{alignat} 

As~\raf{socp2} is a (semi-infinite) monotone system of linear inequalities, we may be tempted to apply the result in (B2) for a single inequality and then take a union bound. However, as the number of inequalities in~\raf{socp2} is infinite and the  union is taken over an {\it uncountable} set, the union bound does not hold\footnote{In fact, a simple but incorrect proof via the union bound can go as follows. Using the notation in the proof of Theorem~\ref{t4}, we have \add{$|\cI(\cY) \cap  \cI(\cF)|=|\bigcup_{\bu\in\BB^{d}_+(0,1)}\cI(\cY) \cap  \cI(\cF_{\bu})|\le \int_{\BB^{d}_+(0,1)}|\cI(\cY) \cap  \cI(\cF_{\bu})|d\bu\le n|\cY|\vol(\BB^{d}_+(0,1))\sim\frac{n|\cY|}{\sqrt{\pi d}}\left(\frac{2\pi e}{d}\right)^{d/2}$}. This (incorrect) bound is counter-intuitive in the sense that it decreases with $d$ (for large $d$).}. Instead, we argue that we can take the  union bound only over \add{$O(n)^{2d}$} inequalities. To see this, we first recall the following definition and a lemma. 

A monotone function  $f:\cC \rightarrow  \RR_+$
is called {\em 2-monotonic} if there exists a permutation $\sigma:[n]\to[n]$ such that, for all $\bx\in\cC$ and $k,j\in[n]$ with $k<j$, $\bx_{\sigma(k)}<c_{\sigma(k)}$ and $\bx_{\sigma(j)}>0$, it holds that $f(\bx+\b1^{\sigma(k)}-\b1^{\sigma(j)}) \ge f(\bx)$. For instance, if $f(\bx):=\sum_{j}^nw_jx_j$ is a linear function with non-negative coefficients (i.e., $\bw\ge 0$), then $f$ is 2-monotonic (with $\sigma=\sigma_\bw$ being a permutation satisfying $w_{\sigma(1)}\ge w_{\sigma(2)}\ge\cdots\ge w_{\sigma(n)}$). 

\begin{lemma}[Based on~\cite{C87,BEGKM02-SICOMP}]\label{l2}
    Consider the system~\raf{mon} \add{but where the inequality index $i$ may vary over an {\it uncountable} set $U$} and suppose that each function $f_i:\cC\mapsto\RR_+$ is a 2-monotonic function as verified by a permutation $\sigma_i:[n]\to[n]$. Let $\cF$ be the set of maximal feasible vectors for~\raf{mon}. Then for any non-empty subset $\cY\subseteq\cF$, it holds that 
	\begin{align}\label{2mon-dbdd}
		|\cI(\cY)\cap \cI(\cF)|\leq r'\sum_{\by\in\cY}q(\by),
	\end{align}
	where \add{$r':= 
	|\{\sigma_i|~i\in U\}|$ is the number of distinct permutations among the $\sigma_i$'s} and $q(\by):=|\{j\in[n]~:~y_j<c_j\}|$.  
\end{lemma}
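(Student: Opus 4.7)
The plan is to partition $\cX := \cI(\cY) \cap \cI(\cF)$ according to the permutation of the 2-monotonic witness, and then show that within each permutation class we can injectively embed $\cX_\sigma$ into the set $\{(\by,j) : \by \in \cY,\ y_j < c_j\}$, whose cardinality is exactly $\sum_{\by \in \cY} q(\by)$. The union bound then yields the claimed factor of $r'$.

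For each $\bx \in \cX$, infeasibility gives some index $\bu = \bu(\bx) \in U$ with $f_\bu(\bx) > t_\bu$; let $\sigma := \sigma_{\bu(\bx)}$ be the associated permutation. Group elements of $\cX$ by this permutation into classes $\cX_\sigma$. Since only $r'$ distinct permutations arise, there are at most $r'$ nonempty classes. Fix one such class $\cX_\sigma$ and, to ease notation, assume $\sigma$ is the identity. For $\bx \in \cX_\sigma$, let $j^*(\bx) := \max\{j \in [n] : x_j > 0\}$ (this is well defined since $\bx$ is infeasible while $\bzero$ is feasible; else $\bx=\bzero$ cannot occur in $\cX$). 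Since $\bx \in \cI(\cY)$ and $x_{j^*} > 0$, there exists $\by := \by(\bx) \in \cY$ with $\bx - \b1^{j^*} \leq \by$; since also $\bx \not\leq \by$, we must have $x_{j^*} = y_{j^*} + 1$. Define $\phi(\bx) := (\by(\bx), j^*(\bx))$; then automatically $y_{j^*} = x_{j^*} - 1 \leq c_{j^*} - 1 < c_{j^*}$, so $\phi$ maps into $\{(\by,j) : \by \in \cY,\ y_j < c_j\}$.

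The main step is to show $\phi$ is injective, and here the 2-monotonic hypothesis does the work. I claim that for $\bx \in \cX_\sigma$ we in fact have $x_j = y_j$ for every $j < j^*$ (and $x_j = 0$ for $j > j^*$ by definition), so $\bx$ is uniquely determined by $(\by, j^*)$. Suppose, for contradiction, that $x_{j_0} < y_{j_0}$ for some $j_0 < j^*$. Consider $\bx' := \bx + \b1^{j_0} - \b1^{j^*}$. Because $x_{j_0} < y_{j_0} \leq c_{j_0}$ and $x_{j^*} > 0$, the 2-monotonicity of $f_{\bu(\bx)}$ (with $\sigma = \mathrm{id}$, $k = j_0 < j^* = j$) gives $f_{\bu(\bx)}(\bx') \geq f_{\bu(\bx)}(\bx) > t_{\bu(\bx)}$, so $\bx'$ is infeasible. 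On the other hand, $\bx' \leq \by$ coordinatewise: on coordinates $j \notin \{j_0, j^*\}$ we have $x'_j = x_j \leq y_j$ (using $x_j \leq y_j$ from $\bx - \b1^{j^*} \leq \by$); at $j_0$ we have $x'_{j_0} = x_{j_0} + 1 \leq y_{j_0}$; at $j^*$ we have $x'_{j^*} = x_{j^*} - 1 = y_{j^*}$. Since $\by \in \cF$ is feasible, so is $\bx'$, a contradiction.

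Hence the map $\phi : \cX_\sigma \to \{(\by,j) : \by \in \cY,\ y_j < c_j\}$ is injective, yielding $|\cX_\sigma| \leq \sum_{\by \in \cY} q(\by)$. Summing over the at most $r'$ permutation classes gives $|\cX| \leq r' \sum_{\by \in \cY} q(\by)$, as required. The step I expect to be delicate to write cleanly is the reduction to the identity permutation together with the bookkeeping of coordinates in the contradiction argument; the underlying idea, a single 2-monotonic swap producing a vector simultaneously infeasible and dominated by $\by$, is the entire content of the proof.
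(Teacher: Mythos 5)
Your proof is correct and follows essentially the same approach as the paper's: in both, the key observation is that a $2$-monotonic swap $\bx+\b1^{\sigma(j_0)}-\b1^{\sigma(j^*)}$ (with $j_0 < j^*$ in the order given by $\sigma$) stays infeasible while also being dominated by the feasible $\by$, forcing $\bx$ to agree with $\by$ on all coordinates before the "last nonzero" position $j^*$, be $y_{\sigma(j^*)}+1$ at $j^*$, and vanish afterwards, so that $\bx$ is determined by the triple $(\sigma,\by,j^*)$. The paper phrases this as an explicit inclusion of $\cX$ in the set of vectors $\by^{\sigma_i,j}$, whereas you partition $\cX$ into classes $\cX_\sigma$ and exhibit an injection from each class; these are cosmetically different organizations of the identical counting argument.
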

We give the proof in the appendix for completeness. 
We will also need the following geometric fact.
\begin{fact}[see, e.g., \cite{M02}]\label{f1}
Any arrangement of $m$ $d$-dimensional hyperplanes partitions $\RR^d$ into at most $\Phi_d(m):=\sum_{i=0}^d \binom{m}{i}\le \left(\frac{e m}{d}\right)^d$ maximal  connected regions not intersected by any  of the hyperplanes (called  cells of the arrangement).  
\end{fact}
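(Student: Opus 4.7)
My plan is to prove the identity $\Phi_d(m)=\sum_{i=0}^d\binom{m}{i}$ as an upper bound on the cell count by double induction on $d$ and $m$, and then derive the closed-form estimate from standard binomial tail bounds.

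First, I would set up the recurrence. Let $r(d,m)$ denote the maximum number of cells that $m$ hyperplanes can create in $\RR^d$. The base cases $r(d,0)=1$ (no hyperplanes leave all of $\RR^d$ as one cell) and $r(0,m)=1$ (a single point in $\RR^0$) are immediate. For the inductive step, take any arrangement $\cA$ of $m-1$ hyperplanes with at most $r(d,m-1)$ cells, and add a new hyperplane $H$. The hyperplane $H$ meets the other $m-1$ hyperplanes in at most $m-1$ affine subspaces of dimension $d-2$, which form an arrangement of at most $m-1$ hyperplanes inside $H\cong\RR^{d-1}$; this partitions $H$ into at most $r(d-1,m-1)$ cells. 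Each such $(d-1)$-dimensional cell of the restricted arrangement lies in exactly one $d$-dimensional cell of $\cA$ and splits it into two. Hence
\begin{equation*}
r(d,m)\le r(d,m-1)+r(d-1,m-1).
\end{equation*}

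Second, I would verify by Pascal's identity that $\Phi_d(m)=\sum_{i=0}^d\binom{m}{i}$ satisfies the same recurrence and base cases. Indeed,
\begin{equation*}
\Phi_d(m-1)+\Phi_{d-1}(m-1)=\sum_{i=0}^d\binom{m-1}{i}+\sum_{i=1}^{d}\binom{m-1}{i-1}=\sum_{i=0}^d\binom{m}{i}=\Phi_d(m),
\end{equation*}
after applying $\binom{m-1}{i}+\binom{m-1}{i-1}=\binom{m}{i}$. Together with $\Phi_d(0)=1=\Phi_0(m)$, induction gives $r(d,m)\le\Phi_d(m)$.

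Third, I would obtain the closed-form bound $\Phi_d(m)\le(em/d)^d$. In the regime $m\ge d$, I would use the standard estimate $\binom{m}{i}\le\binom{m}{d}(d/m)^{d-i}$ for $i\le d$, so that $\sum_{i=0}^d\binom{m}{i}\le\binom{m}{d}\sum_{i=0}^d(d/m)^{d-i}\le\binom{m}{d}\cdot\frac{1}{1-d/m}$ when $m>d$, and then apply $\binom{m}{d}\le(em/d)^d$. A minor book-keeping check handles the boundary cases $m<d$ (where the stated bound is weaker than the trivial $2^m$) and $m=d$.

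The main obstacle, such as it is, lies in the inductive step: one must argue that even in degenerate arrangements (parallel hyperplanes, coincident hyperplanes, non-transversal intersections), cutting by the new hyperplane $H$ can only decrease the number of induced cells on $H$ relative to the generic case, so the recurrence still yields an upper bound. This is handled by observing that any arrangement can be perturbed into general position without decreasing the cell count, so proving the bound for generic arrangements suffices.
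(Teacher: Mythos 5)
The paper does not prove Fact~\ref{f1}; it is cited directly from Matou\v{s}ek's text \cite{M02} and used as a black box. Your proposal reconstructs the standard cut-and-count argument: the recurrence $r(d,m)\le r(d,m-1)+r(d-1,m-1)$ obtained by restricting the remaining hyperplanes to a new hyperplane $H$, together with Pascal's identity and the base cases $r(d,0)=r(0,m)=1$, correctly yields $r(d,m)\le\Phi_d(m)$. Your handling of degeneracy via perturbation to general position is also the right way to justify treating the recurrence as an upper bound.

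However, there is a concrete error in your third step. The intermediate estimate $\binom{m}{i}\le\binom{m}{d}\left(d/m\right)^{d-i}$ is false: take $m=4$, $d=2$, $i=1$, which gives $\binom{4}{1}=4$ on the left and $\binom{4}{2}\cdot(2/4)=3$ on the right. The ratio $\binom{m}{i}/\binom{m}{i+1}=(i+1)/(m-i)$ need not be bounded by $d/m$, so this telescoping fails. The standard fix, which delivers exactly the stated bound, is to multiply through by $(d/m)^d$ and observe that for $0<d\le m$ and $i\le d$ one has $(d/m)^d\le(d/m)^i$, hence
\begin{equation*}
\left(\frac{d}{m}\right)^{d}\sum_{i=0}^{d}\binom{m}{i}\le\sum_{i=0}^{d}\binom{m}{i}\left(\frac{d}{m}\right)^{i}\le\left(1+\frac{d}{m}\right)^{m}\le e^{d},
\end{equation*}
giving $\Phi_d(m)\le(em/d)^d$ without any need for the geometric-series step or the $1/(1-d/m)$ factor (which in any case degenerates as $m\to d$). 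With that replacement the proof is complete and correct; the structure and the recurrence you set up are precisely what the cited source uses.
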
	

\medskip

We are now ready to prove \add{the following theorem}.

\begin{theorem}\label{t4}
Let $\cF$ be the set of maximal feasible vectors for~\raf{socp}. Then for any non-empty subset $\cY\subseteq\cF$, it holds that 
\begin{align}\label{socp-bdd}
|\cI(\cY)\cap \cI(\cF)|\leq \Phi_{d}(n(n-1)/2)n|\cY|=O(n)^{2d+1}|\cY|.
\end{align}
\end{theorem}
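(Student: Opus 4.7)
The plan is to reduce the single SOC inequality \raf{socp} to a (semi-infinite) system of monotone linear inequalities and then apply Lemma~\ref{l2}, after carefully bounding the number of distinct permutations that can arise.

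First, I would observe that \raf{socp} is equivalent to the uncountable system~\raf{socp2}. Indeed, by Cauchy--Schwarz, $\|A\bx\|=\max_{\bu\in\RR^d:\|\bu\|\le 1}\bu^\top A\bx$, and since $A\bx\in\RR_+^d$ for $\bx\in\cC$, the maximizer $\bu^*=A\bx/\|A\bx\|$ lies in $\BB^d_+$; hence we may restrict $\bu$ to $\BB^d_+$ without loss of generality. For each fixed $\bu\in\BB^d_+$, the function $f_\bu(\bx)=(A^\top\bu+\bb)^\top\bx$ is linear with non-negative coefficient vector $\bw_\bu:=A^\top\bu+\bb\ge\bzero$, and therefore 2-monotonic with any permutation $\sigma_\bu$ sorting the entries of $\bw_\bu$ in non-increasing order. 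Denoting by $\cF_\bu$ the maximal feasible set of $f_\bu(\bx)\le t$ and $\cF$ that of \raf{socp}, one has $\cI(\cF)=\bigcup_{\bu\in\BB^d_+}\cI(\cF_\bu)$ in the sense used by Lemma~\ref{l2}.

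The crucial step is to bound $r':=|\{\sigma_\bu:\bu\in\BB^d_+\}|$, even though $\bu$ ranges over an uncountable set. The permutation $\sigma_\bu$ is determined solely by the relative ordering of the $n$ coordinates of $\bw_\bu=A^\top\bu+\bb$. For $i\ne j$, the set where the $i$-th and $j$-th coordinates coincide is the hyperplane $H_{ij}:=\{\bu\in\RR^d:(A_i-A_j)^\top\bu=b_j-b_i\}$, where $A_i$ denotes the $i$-th column of $A$. Within any single cell of the arrangement $\{H_{ij}:1\le i<j\le n\}$ the ordering of the entries of $\bw_\bu$ is constant, so each cell contributes at most one permutation. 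There are at most $\binom{n}{2}=n(n-1)/2$ such hyperplanes in $\RR^d$, so by Fact~\ref{f1} the number of cells, and hence $r'$, is at most $\Phi_d\bigl(n(n-1)/2\bigr)$.

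Finally, I would apply Lemma~\ref{l2} to the semi-infinite 2-monotonic system \raf{socp2}, noting that $q(\by)=|\{j\in[n]:y_j<c_j\}|\le n$, to obtain
\begin{align*}
|\cI(\cY)\cap\cI(\cF)|\le r'\sum_{\by\in\cY}q(\by)\le \Phi_d\!\left(\tfrac{n(n-1)}{2}\right)\cdot n\cdot|\cY|=O(n)^{2d+1}|\cY|,
\end{align*}
using the bound $\Phi_d(m)\le(em/d)^d$ from Fact~\ref{f1}. The main obstacle is the second step: justifying that a continuous family of linear inequalities gives rise to only finitely many distinct 2-monotonic orderings, which is exactly the combinatorial content supplied by the hyperplane arrangement bound. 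The rest is bookkeeping on top of Lemma~\ref{l2}.
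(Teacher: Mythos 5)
Your proposal is correct and follows the same route as the paper's proof: rewrite \raf{socp} as the semi-infinite linear system \raf{socp2}, observe that each $f_\bu$ is 2-monotonic via the sorting permutation of $\bw_\bu = A^\top\bu+\bb$, bound the number of distinct permutations by the cell count $\Phi_d(n(n-1)/2)$ of the arrangement of hyperplanes $\{(A_i-A_j)^\top\bu = b_j-b_i\}$, and invoke Lemma~\ref{l2}. The only cosmetic difference is that the paper phrases the reduction as $\cI(\cY)\cap\cI(\cF)=\bigcup_{\bu\in\BB_+^d}(\cI(\cY)\cap\cI(\cF_\bu))$ before applying the lemma, which you gesture at slightly more loosely, but the substance is identical.
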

\begin{proof}
For $\bu\in\add{\BB_+^{d}}$, let $\bw^{\bu}:=A^\top\bu+\bb\in\RR_+^n$ and $\cF_{\bu}$ be the set of maximal feasible solutions for the inequality $f_{\bu}(\bx)=(\bw^{\bu})^\top\bx\le \add{t}$. \add{Then, $\cI(\cY) \cap  \cI(\cF)=\bigcup_{\bu\in \BB_+^d}^r\left(\cI(\cY) \cap  \cI(\cF_\bu)\right)$.} By Lemma~\ref{l2}, it is enough to show that the number of distinct permutations defined by the set of weights  
$\{\bw^{\bu}~|~\bu\in\add{\BB^{d}_+}\}$ is at most $\Phi_{d}(n(n-1)/2)$. More precisely, to each vector $\bw\in\RR^n_+$, let us assign a permutation  $\sigma=\sigma_\bw$ satisfying $w_{\sigma(1)}\ge w_{\sigma(2)}\ge\cdots\ge w_{\sigma(n)}$ (note that that there may be multiple permutations $\sigma$ satisfying this, in which case $\sigma_\bw$ is chosen arbitrarily among them). Then we claim that
\begin{align}\label{e12}
|\{\sigma_{\bw^{\bu}}~|~\bu\in\add{\BB^{d}_+}\}|\le \Phi_{d}(n(n-1)/2).
\end{align}
To see~\raf{e12}, let us write $A=[\ba^{1},\ldots,\ba^{n}]$ where $\ba^{j}\in\RR^d_+$ is the $j$th column of $A$. Then $w_j:=w_j^{\bu}=(\ba^{j})^\top\bu+b_j$. Let us consider the system of inequalities $w_j\le w_{j'}$ for\add{ distinct} $j,j'\in[n]$ (considering $\bu$ as a variable in $\RR^{d}$):
\begin{align}\label{e13}
\left(\ba^{j}-\ba^{j'}\right)^\top\bu\le b_{j'}-b_j,\quad\text{ for }j\ne j'\in[n].
\end{align}
The inequality-defining hyperplanes in~\raf{e13} form a hyperplane arrangement that, by Fact~\ref{f1}, partitions $\RR^{d}$ into at most $\Phi_{d}(n(n-1)/2)$ cells. Consider any such cell $C$. Any point $\bu\in C$ decides, for each pair $j\ne j'$, whether $w^{\bu}_j\le w^{\bu}_{j'}$ or $w^{\bu}_j>w^{\bu}_{j'}$, and hence can be associated with a total order on the weights $w^{\bu}_1,\ldots, w^{\bu}_n$. Moreover, all points in $C$ give rise to the same total order, while any two points belonging to two different cells give rise to two different orders. It follows that  the number of such orders is exactly equal to the number of cells. This  establishes~\raf{e12} and the theorem.
\end{proof}
\add{
\begin{remark}
The proof of Theorem~\ref{t4} shows that, for the purpose of enumerating the set $\cI(\cY)\cap\cI(\cF)$ for a given subset $\cY\subseteq\cF$ of the maximal feasible solutions of a SOC inequality~\raf{socp}, we may replace the inequality~\raf{socp} by $O(n)^{2d}$ linear inequalities obtained by selecting a representative $\bu\in\BB_+^d$ from each cell of the arrangement determined by~\raf{e13}. One should note, however, that this linear system is not equivalent to~\raf{socp} in terms of enumerating the family $\cF$.\footnote{\add{For example, consider the two inequalities $x_1+x_2\le 1$ and $x_1+2x_2\le 1$. While they share the same permutation, their sets of maximal feasible vectors are different.}}  
\end{remark}
}
\section{Positive Semidefinite Matrix Inequalities}\label{sec:psd}
We denote by $\SS^m$ the set of all $m\times m$ real symmetric matrices and by  
$\SS^m_+\subseteq \SS^m$ the set of all $m\times m$ positive semidefinite matrices. 
Consider the monotone positive semidefinite \add{(PSD)} inequality:
		\begin{alignat}{3}
		\label{sdp}
	    \quad & \displaystyle f(\bx)\preceq ~T,\\
		\qquad &\bx\in\ZZ_+^n,\nonumber
		\end{alignat}
\noindent where $f(\bx):=\sum_{j=1}^nA^{j}x_j$, $A^{j}\in\SS_+^{m}$, for  $j\in[n]$, and $T\in\SS^{m}_+$ are given positive semidefinite matrices, and
"$\succeq$`` is the {\it L\"owner} (partial) order on matrices:  $A\succeq B$ if and only if $A-B$ is positive semidefinite. 

Let $\bI_m$ be the $m\times m$ identity matrix. For two matrices \add{$A, B\succeq 0$}, we use the standard notation:  $A\bullet B=\Tr(AB):=\sum_{k=1}^m\sum_{j=1}^m a_{k,j}\add{b_{k,j}}$, where $a_{kj}$ denotes the $kj$-th entry of the matrix $A$. 
We recall the following well-known facts; see,.e.g., \cite{HJ90}:

\begin{fact}\label{f2}
	Let $A\in\SS^m$. Then 
	\begin{itemize}
		\item[(i)] $A\in\SS_+^m$ iff $A\bullet B\ge 0$ for all $B\in\SS_+^m$;
		\item[(ii)] if $A\in\SS_+^m$ then for any $i\in[m]$, $a_{ii}\ge 0$ with $a_{ii}=0$ implying that the entire $i$th row and \add{column} of $A$ are zero;
		\item[(iii)] $A\in\SS_+^m$ iff for any invertible matrix $B\in\SS^m$, $BAB^\top\succeq 0$;
		\item[(iv)]  if $A\in\SS_+^m$ 
	and 
	$\rank(A)=k$, then there exists a (unique) \add{invertible matrix $U$} such that, upto a permutation of the rows and columns of $A$, we can write  
    	\begin{align*}
		UAU^\top&=\bar\bI_k:= \left[\begin{array}{c|c}
		\bI_{k}&0\\
		\hline
		0&0
		\end{array}
		\right].
		\end{align*}
	\end{itemize}
		
\end{fact}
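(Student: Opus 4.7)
The plan is to verify each of the four items using only the quadratic-form definition of positive semidefiniteness together with the spectral theorem. For (i), the forward direction is immediate from a symmetric square-root factorization $A=CC^\top$: then $A\bullet B=\Tr(CC^\top B)=\Tr(C^\top BC)\ge 0$, since $C^\top BC\in\SS_+^m$ whenever $B\in\SS_+^m$. For the converse, specializing to the rank-one test matrices $B:=vv^\top$ (for $v\in\RR^m$) recovers $v^\top Av=A\bullet vv^\top\ge 0$, which is the defining condition for $A\in\SS_+^m$. For (ii), plugging the $i$-th unit vector $e_i$ into the quadratic form yields $a_{ii}=e_i^\top A e_i\ge 0$; when $a_{ii}=0$, substituting $v:=t\,e_i+e_j$ for arbitrary $t\in\RR$ gives $t^2 a_{ii}+2ta_{ij}+a_{jj}=2ta_{ij}+a_{jj}\ge 0$, which forces $a_{ij}=0$ (otherwise one sign of large $|t|$ would violate the inequality), and by symmetry $a_{ji}=0$, so the entire $i$-th row and column vanish.

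For (iii), the forward implication is immediate from $w^\top(BAB^\top)w=(B^\top w)^\top A(B^\top w)\ge 0$ for every $w\in\RR^m$; the converse uses invertibility of $B$ to write $A=B^{-1}(BAB^\top)(B^{-1})^\top$ and applies the same congruence argument with $B^{-1}$ in place of $B$. For (iv), I would begin from the spectral decomposition $A=Q\Lambda Q^\top$, where $Q$ is orthogonal and $\Lambda=\diag(\lambda_1,\ldots,\lambda_m)$ has exactly $k$ strictly positive entries (this uses $A\in\SS_+^m$ and $\rank(A)=k$). A permutation matrix $P$ brings those entries to the leading $k$ positions, giving $P\Lambda P^\top=\diag(\mu_1,\ldots,\mu_k,0,\ldots,0)$ with each $\mu_i>0$; then setting $D:=\diag(\mu_1^{-1/2},\ldots,\mu_k^{-1/2},1,\ldots,1)$ and $U:=DPQ^\top$ yields an invertible matrix with $UAU^\top=\bar\bI_k$ (up to the row-column permutation encoded by $P$).

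Each of these items is classical and appears in standard matrix-analysis references (indeed the paper cites \cite{HJ90}), so no serious mathematical obstacle is expected. The only subtle point is the meaning of the ``uniqueness'' clause in (iv): $U$ is certainly not literally unique, because one can compose with any orthogonal transformation acting on the kernel of $A$ (or on the leading $k\times k$ block) without changing $UAU^\top$. One should therefore interpret the statement as asserting the existence of a canonical $U$ given by the above construction for a fixed choice of $P$, or as uniqueness modulo the natural orthogonal ambiguity; in either case the existence portion, which is all that is used in the sequel, is immediate from the spectral decomposition argument above.
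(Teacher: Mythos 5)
The paper gives no proof of Fact~\ref{f2}; it is stated as a collection of standard facts with a pointer to \cite{HJ90}. Your arguments for (i)--(iv) are correct and are essentially the textbook proofs one would find there: item (i) via the square-root factorization $A=CC^\top$ and the rank-one test matrices $B=vv^\top$, item (ii) via the quadratic form at $e_i$ and at $te_i+e_j$, item (iii) via congruence with $B$ and its inverse (noting that a symmetric invertible $B$ has symmetric inverse, so $B^{-1}\in\SS^m$ as required by the statement), and item (iv) via the spectral decomposition followed by a permutation and a diagonal rescaling to normalize the nonzero eigenvalues. You are also right to flag the ``(unique)'' qualifier in (iv): as stated it cannot be literal, since $U$ may be composed with any orthogonal map that fixes $\bar\bI_k$ under congruence (block-orthogonal on the leading $k\times k$ block and arbitrary orthogonal on the complementary block). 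What the paper actually uses downstream (before Theorem~\ref{t5}) is only the existence of such a $U$, which your spectral construction delivers, so the imprecision is harmless; but it would be cleaner for the paper to either drop ``unique'' or state uniqueness modulo that orthogonal stabilizer.
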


By Fact~\ref{f2}(i), if $\bx$ satisfies \raf{sdp}, then $\bI_{m}\bullet f(\bx)=\sum_{j=1}^n\bI_{m}\bullet A^{j}x_j\le\bI_{m}\bullet T$, which in turn implies that $x_j\le c_{j}':=\frac{\Tr(T)}{\Tr(A^{j})}$. Thus we may restrict the set of solutions for \raf{sdp} to the integer box $\cC:=\{\bx\in\RR^n|~0\le\bx\le \bc\}$, where $c_j:=\lfloor c_j'\rfloor$. Suppose that $\rank(T)=d$. By Fact~\ref{f2}(iv), we can write  $UTU^\top=\bar\bI_{d}$, for an \add{invertible} matrix $U$. Fact~\ref{f2}(iii) then implies that we can left-multiply by \add{$U^{-1}$} and right-multiply by \add{$U^{-\top}$} both sides of the $i$th inequality in~\raf{sdp} without changing the set of feasible solutions.  In other words, after possibly permuting the rows and columns of the matrices $A^{j}$, we can write~\raf{sdp} as follows:
	\begin{alignat}{3}
\label{sdp2}
\quad & \displaystyle \sum_{j=1}^nB^{j}x_j\preceq ~\bar \bI_{d},\\
\qquad &\bx\in\cC,\nonumber
\end{alignat}
where $B^{j}:=\add{U^{-1}}A^{j}\add{U^{-\top}}$. We further note by \add{Fact~\ref{f2}(ii)} that, if $b^{j}_{kk}>0$ for some $k>d$, then any feasible solution $\bx$ to~\raf{sdp} must have $x_j=0$. Let \add{$N:=\{j\in[n]~|~ b^{j}_{kk}=0 \text{ for all }k>d\}$}. Then, \add{Fact~\ref{f2}(ii)} also implies that, for all $j\in N$, $B^{j}$ can be written as:
	\begin{align}\label{decomp2}
B^{j}&:= \left[\begin{array}{c|c}
C^{j}&0\\
\hline
0&0
\end{array}
\right],
\end{align}
where $C^{j}\in\SS_+^{d}$.  Hence, we may consider the following inequality, equivalent to~\raf{sdp2}:
	\begin{alignat}{3}
\label{sdp3}
\quad & \displaystyle \sum_{j\in N}C^{j}x_j\preceq ~\bI_{d},\\
\qquad &\bx\in\cC.\nonumber
\end{alignat}
Let $\cF$ and $\cG$ be the sets of maximal feasible vectors for~\raf{sdp} and~\raf{sdp3}, respectively. Then, $|\cF|=|\cG|$ (as $\cF=\{(\bx,\bzero^{[n]\setminus N}):~\bx\in\cG\}$ where $\bzero^{[n]\setminus N}$ denotes a vector of zeros in positions $i\in [n]\setminus N$), while $|\cI(\cF)|\le|\cI(\cG)|+n-|N|$ (as $\cI(\cF)=\{(\bx,\bzero^{[n]\setminus N}):~x\in\cI(\cG)\}\cup\{\add{\b1^i}:~i\in[N]\setminus[n]\}$). 

\medskip

We will use the following fact, which is a generalization of Fact~\ref{f1}.
\begin{fact}[see, e.g., \cite{M02}]\label{f3}
	Let $p_1,\ldots,p_m~:~\RR^d\to\RR$ be real polynomials of maximum degree $D$, and denote by $Z_i:=\{\bx\in\RR^d~|~p_i(\bx)=0\}$ the  zero set of $p_i$. Then the number of cells (and, in fact, all the faces) in the arrangement of the surfaces $Z_1,\ldots,Z_m$ is at most $\Psi_{d,D}(m):=2(2D)^d\sum_{i=0}^d 2^i\binom{4m+1}{i}$, which is bounded by $\left(\frac{50Dm}{d}\right)^d$, for \add{$m\ge d\ge 2$}.  
\end{fact}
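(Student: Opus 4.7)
Since Fact~\ref{f3} is cited as a standard result from~\cite{M02}, the plan is to follow the classical approach via sign-condition counting combined with the Oleinik--Petrovsky--Milnor--Thom (OPMT) inequality on sums of Betti numbers of real algebraic sets. First, I would normalize to general position by a symbolic perturbation: replace each $p_i$ by $p_i+\epsilon_iq_i$ for generic polynomials $q_i$ of degree $D$, and let the $\epsilon_i$ tend to $0$ in a suitable order. In the perturbed arrangement, any $k\le d$ of the surfaces $Z_i$ intersect transversally in a smooth $(d-k)$-dimensional algebraic variety, and no $d+1$ of them share a common point. A semi-continuity argument for Betti numbers under deformation ensures that the face count of the original arrangement is bounded by that of the perturbed one.

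Next, I would parameterize each face by a pair $(S,\sigma)$, where $S\subseteq[m]$ is the set of indices with $p_i\equiv 0$ on the face (so $|S|\le d$ by general position) and $\sigma\in\{-,+\}^{[m]\setminus S}$ is the strict sign vector of the remaining polynomials on the face. For a fixed $S$, the faces with ``active set'' $S$ are precisely the connected components of $V_S\setminus\bigcup_{i\notin S}Z_i$ on the smooth subvariety $V_S:=\bigcap_{i\in S}Z_i$ of dimension $d-|S|$. Applying OPMT on $V_S$ would yield that the sum over all $\sigma$ of the number of such components is at most $2(2D)^{d-|S|}\sum_{i=0}^{d-|S|}2^i\binom{2(m-|S|)+1}{i}$. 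Summing over all $S$ with $|S|\le d$ and absorbing the binomial factors into $\binom{4m+1}{i}$ would give the claimed $\Psi_{d,D}(m)=2(2D)^d\sum_{i=0}^d 2^i\binom{4m+1}{i}$; the asymptotic $(50Dm/d)^d$ bound for $m\ge d\ge 2$ then follows from the elementary estimate $\sum_{i=0}^d\binom{N}{i}\le(eN/d)^d$ and constant absorption.

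The main obstacle I anticipate is ensuring that each face is counted exactly once under the $(S,\sigma)$ parameterization and that the OPMT bound transfers faithfully from ambient $\RR^d$ to the subvarieties $V_S$. Both require the perturbation to be chosen so that the restriction of every $p_j$ with $j\notin S$ to $V_S$ is again a smooth hypersurface of $V_S$ of degree at most $D$, and one must avoid double-counting lower-dimensional faces that sit in the closure of higher-dimensional ones. Once the perturbation and the Betti-number semi-continuity are set up carefully, the remaining work is a routine binomial manipulation to produce the stated closed form.
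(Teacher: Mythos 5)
The paper does not actually prove Fact~\ref{f3}: it is cited verbatim from \cite{M02} with ``see, e.g.,'' and no argument appears in the body or the appendix, so there is no in-paper proof to compare your plan against.

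That said, your outline does track the classical route for such statements (perturb to general position, stratify faces by the set $S$ of polynomials vanishing on them, bound each stratum, sum). There is, however, a genuine gap at the central step. The Oleinik--Petrovsky--Milnor--Thom inequality bounds the total Betti number of a \emph{single} real algebraic set defined by polynomial equalities, giving a quantity of the shape $D(2D-1)^{d-1}$. It does not, on its own, produce a bound of the form $2(2D)^{d}\sum_{i=0}^{d}2^{i}\binom{4m+1}{i}$ on the number of \emph{realizable strict sign conditions}, which is what you need for the faces with active set $S$. Your plan asserts that ``applying OPMT on $V_S$'' yields a sum over $\sigma$ with binomial coefficients in $m$, but OPMT contains no $m$-dependence at all; the $\binom{4m+1}{i}$ and the powers of $2$ come from an additional combinatorial/perturbation argument on top of OPMT, e.g.\ Warren's device of introducing auxiliary polynomials $p_j^2-\epsilon^2$ and then intersecting level sets, or the deletion/restriction recursion in Basu--Pollack--Roy. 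Without that bridge, the step from ``bounded Betti numbers of $V_S$'' to ``bounded number of sign-condition cells in $V_S\setminus\bigcup_{j\notin S}Z_j$'' is unjustified.

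A second, smaller issue: the $(S,\sigma)$ labeling does not uniquely identify a face, since a single sign pattern can be realized by several connected components. You note this implicitly, but your accounting then relies on OPMT to control the number of such components, which brings you back to the same gap. Finally, the derivation of the closed-form $(50Dm/d)^d$ estimate from $\Psi_{d,D}(m)$ via $\sum_{i\le d}\binom{N}{i}\le(eN/d)^d$ is routine as you say; that part is fine once the exact expression for $\Psi_{d,D}(m)$ has been established, which is precisely the part your plan leaves open.

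Since this is a standard cited fact, a full reproof is not expected in the paper; but as a self-contained proof plan, the OPMT-to-sign-condition step must be filled in, most naturally by invoking (and proving, or citing precisely) a Warren/Basu--Pollack--Roy sign-condition theorem rather than bare OPMT.
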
	
\begin{theorem}\label{t5}
	Let $\cF$ be the set of maximal feasible vectors for~\raf{sdp}. Then for any non-empty subset $\cY\subseteq\cF$, it holds that 
	\begin{align}\label{sdp-bdd}
		|\cI(\cY)\cap \cI(\cF)|\leq \Psi_{d,2}(n(n-1)/2)n|\cY|=O(n)^{2d+1}|\cY|,
	\end{align}
	where $d:=\rank(T)$. 
\end{theorem}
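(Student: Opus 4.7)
The plan is to adapt the approach of Theorem~\ref{t4} to the PSD setting, replacing the affine arrangement (Fact~\ref{f1}) by the quadratic one (Fact~\ref{f3}). First, by the reduction carried out just before the theorem statement, it suffices to bound $|\cI(\cY')\cap\cI(\cG)|$ for the reduced inequality~\raf{sdp3}, where $\cY'\subseteq\cG$ is the image of $\cY$ under deletion of the $[n]\setminus N$-coordinates (which are identically zero on $\cF$); the additional $n-|N|$ minimal infeasible vectors of the form $\b1^i$, $i\in[n]\setminus N$, that arise from the reduction are easily absorbed into the final bound $O(n)^{2d+1}|\cY|$.

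The key observation is that the PSD inequality~\raf{sdp3} is equivalent to the semi-infinite linear system
$$
f_\bu(\bx):=\sum_{j\in N}(\bu^\top C^j\bu)\,x_j\;\le\;\|\bu\|^2,\qquad \bu\in\RR^d\setminus\{0\},
$$
obtained from the characterization $X\preceq Y\Leftrightarrow \bu^\top X\bu\le\bu^\top Y\bu$ for all $\bu\in\RR^d$. Since each $C^j\succeq 0$, the coefficients $w_j^\bu:=\bu^\top C^j\bu$ are all non-negative, so each $f_\bu$ is a monotone linear function and hence $2$-monotonic with the permutation $\sigma_{\bw^\bu}$ sorting $\bw^\bu$ in non-increasing order. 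Writing $\cF_\bu$ for the maximal feasible set of $f_\bu(\bx)\le\|\bu\|^2$, one checks exactly as in the proof of Theorem~\ref{t4} that $\cI(\cY')\cap\cI(\cG)\subseteq\bigcup_{\bu}\bigl(\cI(\cY')\cap\cI(\cF_\bu)\bigr)$ (any $\bx\in\cI(\cG)$ is infeasible for some $f_\bu$, and removing any positive coordinate restores feasibility of the full system and hence of $f_\bu$). Lemma~\ref{l2} then yields
$$
|\cI(\cY')\cap\cI(\cG)|\;\le\;r'\cdot n\cdot|\cY'|,
$$
where $r':=|\{\sigma_{\bw^\bu}:\bu\in\RR^d\setminus\{0\}\}|$ is the number of distinct orderings that arise.

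It remains to bound $r'$ by $\Psi_{d,2}(n(n-1)/2)$. The ordering $\sigma_{\bw^\bu}$ is fixed once one knows the sign of
$$
w_j^\bu-w_{j'}^\bu\;=\;\bu^\top(C^j-C^{j'})\bu
$$
for every pair of distinct indices $j,j'\in N$; each such expression is a \emph{homogeneous quadratic} polynomial in $\bu\in\RR^d$, and there are at most $\binom{n}{2}$ of them. Their zero sets form an arrangement of real algebraic varieties in $\RR^d$, which by Fact~\ref{f3} with $D=2$ has at most $\Psi_{d,2}(n(n-1)/2)$ cells. Within each cell all pairwise sign patterns are constant, so---after fixing a consistent tie-breaking rule on the zero sets---the induced permutation $\sigma_{\bw^\bu}$ is also constant on the cell, giving $r'\le\Psi_{d,2}(n(n-1)/2)$ and the claimed bound.

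The main technical wrinkle compared to Theorem~\ref{t4} is that $w_j^\bu$ depends \emph{quadratically} rather than affinely on $\bu$; this is precisely what forces us to use Fact~\ref{f3} in place of Fact~\ref{f1}, and is the source of the extra $2^d$ factor that is absorbed into the $O(n)^{2d}$ term. Everything else---the semi-infinite linearization, the non-negativity of the induced weights, the applicability of Lemma~\ref{l2}, and the cell-counting---proceeds in direct analogy to the SOC case.
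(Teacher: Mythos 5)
Your proof is correct and follows essentially the same route as the paper: reduce to~\raf{sdp3}, linearize the L\"owner inequality into a semi-infinite system with non-negative weights $w_j^\bu$ that are quadratic in $\bu$, invoke Lemma~\ref{l2}, and bound the number of distinct sort-orders of $\bw^\bu$ by the number of cells in the arrangement of the $\binom{n}{2}$ quadratic zero sets via Fact~\ref{f3} with $D=2$. The only cosmetic difference is that you parametrize by $\bu\in\RR^d\setminus\{0\}$ with threshold $\|\bu\|^2$ rather than normalizing to $\bu\in\BB^d$ with threshold $1$; these are equivalent and yield the same arrangement.
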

\begin{proof}
		By the argument preceding the theorem, we may consider the equivalent inequality~\raf{sdp3}. Indeed, if we show the bound $|\cI(\cG')\cap \cI(\cG)|\leq \sum_{i=1}^r\Psi_{d}(|N|(|N|-1)/2)|N|\cdot|\cG'|$ for any $\cG'\subseteq\cG$, we get, for $\cY=\{(\bx,\bzero^{[n]\setminus N}):~\bx\in\cG'\}$, 
		\begin{align*}
		|\cI(\cY)\cap \cI(\cF)|\leq \Psi_{d,2}(|N|(|N|-1)/2)|N|\cdot|\cG'|+n-|N|\le \Psi_{d,2}(n(n-1)/2)n|\cY|. 
		\end{align*}
		Thus, for simplicity we will consider~\raf{sdp} and assume w.l.o.g. in the following that $N=[n]$, $T=\bI_{d}$ (and hence, $d=m$). \add{W.l.o.g. we also assume that $d\ge 2$.} 
		
	To show~\raf{sdp-bdd}, we proceed in a way similar to the proof of Theorem~\ref{t4}.  Denoting by $\add{\BB^d}:=\{\bx\in\RR^d:~\|\bx\|\le 1\}$ the $d$-dimensional unit ball centered at the origin, we can rewrite~\raf{sdp} in the following equivalent form:
	\begin{alignat}{3}
		\label{sdp4}
		\quad & \displaystyle \add{f_{\bu}(\bx):= \sum_{j=1}^n(A^{j}}\bullet \bu\bu^\top) x_j\leq ~1,\quad\text{ for }\bu\in\add{\BB^{d}}.
	\end{alignat} 
	For $\bu\in\add{\BB^{d}}$, let $\add{\bw^{\bu}}\in\RR^n_+$ be the vector \add{whose} $j$-th component is $w_j:=A^{j}\bullet \bu\bu^\top$, and $\cF_{\bu}$ be the set of maximal feasible solutions for the inequality $f_{\bu}(\bx)=(\bw^{\bu})^\top\bx\le 1$. \add{Then, $\cI(\cY) \cap  \cI(\cF)=\bigcup_{\bu\in \BB^d}^r\left(\cI(\cY) \cap  \cI(\cF_\bu)\right)$.} 
	\add{By Lemma~\ref{l2}}, it is enough to bound the number of of distinct permutations defined by the set of weights $\{\bw^{\bu}~|~\bu\in\BB^{d}(0,1)\}$:
	\begin{align}\label{e14}
	|\{\sigma_{\bw^{\bu}}~|~\bu\in\BB^{d}(0,1)\}|\le \Psi_{d,2}(n(n-1)/2).
	\end{align}
	Consider the system of inequalities $w_j\le w_{j'}$ for \add{distinct} $j,j'\in[n]$ (considering $\bu$ as a variable in $\RR^{d}$):
	\begin{align}\label{e15}
	\left(A_{j}-A_{j'}\right)\bullet \bu\bu^\top \le 0,\quad\text{ for }j\ne j'\in[n].
	\end{align}
	The inequality-defining polynomials in~\raf{e15} form an arrangement satisfying the conditions in Fact~\ref{f3} with $D:=2$ and \add{$m:=n(n-1)/2$}, and hence partitions $\RR^{d}$ into $\Psi_{d,2}(n(n-1)/2)$ cells. The theorem follows.
\end{proof}

\section{Some Open Questions}
We conclude with some open questions that naturally arise from the preceding work: 
\begin{itemize}
	\item[(O1)] For a polynomial inequity~\raf{mon-s}, where the function  $f$ is of the form~\raf{poly}, can we show a dual bound of the form $q(\cL,|\cY|)=\poly(n,\add{|\cH|},|\cY|,d)$, independent of $t$, $R$ and polynomial in  $d:=\max_{H,\add{j}} d_{H,j}$ (in comparison to the bound in~\raf{bd2--})?
	 \item[(O2)]  For a {\it single} linear inequality of the form~\raf{mon-s}, it is known that all maximal feasible solutions can be enumerated in polynomial time~\cite{BEGKM02-SICOMP,C87,PS94}. If $f$ is a polynomial of constant \add{number of variables per term}, then (B3) implies that all maximal feasible solutions can be enumerated in quasi-polynomial time via a dual-boundedness argument. It remains open whether a {\it polynomial} time enumeration  algorithm exists for a single polynomial inequality with \add{fixed number of variables per term, or at least, with fixed degree}.
\item[(O3)]  For an SOC inequity of the form~\raf{socp}, can one show a dual bound of the form $q(\cL,|\cY|)=\poly(n,d,|\cY|)$, as opposed to the bound in Theorem~\ref{t4} which depends exponentially on $d$? 
\item[(O4)]  Is there a  polynomial time algorithm for enumerating all maximal feasible solutions for  a {\it single} SOC inequity~\raf{socp}, when the number of rows $d$ is fixed (in comparison to a quasi-polynomial time algorithm that \add{follows from Theorem~\ref{t4}})?

\item[(O5)] Similar questions as (O3) and (O4) arise for a PSD inequality of the form~\raf{sdp}, considering the rank $d$ of the matrix $T$ as a parameter that can be either fixed or a part of the input. 
\end{itemize} 

\newcommand{\etalchar}[1]{$^{#1}$}


\begin{thebibliography}{BGKM04}
	
	\bibitem[AW02]{AW02}
	R.~Ahlswede and A.~Winter.
	\newblock Strong converse for identification via quantum channels.
	\newblock {\em IEEE Transactions on Information Theory}, 48(3):569--579, 2002.
	
	\bibitem[BEG{\etalchar{+}}02]{BEGKM02-SICOMP}
	E.~Boros, K.~Elbassioni, V.~Gurvich, L.~Khachiyan, and K.~Makino.
	\newblock Dual-bounded generating problems: All minimal integer solutions for a
	monotone system of linear inequalities.
	\newblock {\em SIAM Journal on Computing}, 31(5):1624--1643, 2002.
	
	\bibitem[BEGK02]{BEGK02-MFCS}
	E.~Boros, K.~Elbassioni, V.~Gurvich, and L.~Khachiyan.
	\newblock Matroid intersections, polymatroid inequalities, and related
	problems.
	\newblock In {\em Mathematical Foundations of Computer Science 2002, 27th
		International Symposium, {MFCS} 2002, Proceedings}, volume 2420 of {\em
		Lecture Notes in Computer Science}, pages 143--154. Springer, 2002.
	
	\bibitem[BEGK03]{BEGK-DAM03}
	E.~Boros, K.~Elbassioni, V.~Gurvich, and L.~Khachiyan.
	\newblock An inequality for polymatroid functions and its applications.
	\newblock {\em Discrete Applied Mathematics}, 131:255--281, 2003.
	
	\bibitem[BGKM00]{BGKM00}
	E.~Boros, V.~Gurvich, L.~Khachiyan, and K.~Makino.
	\newblock Dual-bounded generating problems: Partial and multiple transversals
	of a hypergraph.
	\newblock {\em {SIAM} J. Comput.}, 30(6):2036--2050, 2000.
	
	\bibitem[BGKM01]{BGKM01}
	E.~Boros, V.~Gurvich, L.~Khachiyan, and K.~Makino.
	\newblock Dual-bounded generating problems: Partial and multiple transversals
	of a hypergraph.
	\newblock {\em SIAM Journal on Computing}, 30(6):2036--2050, 2001.
	
	\bibitem[BGKM04]{BGKM04}
	E.~Boros, V.~Gurvich, L.~Khachiyan, and K.~Makino.
	\newblock Dual-bounded generating problems: weighted transversals of a
	hypergraph.
	\newblock {\em Discret. Appl. Math.}, 142(1-3):1--15, 2004.
	
	\bibitem[BI95]{BI95}
	J.~C. Bioch and T.~Ibaraki.
	\newblock Complexity of identification and dualization of positive boolean
	functions.
	\newblock {\em Information and Computation}, 123(1):50--63, 1995.
	
	\bibitem[Bud88]{BBMW88}
	L.~Budach.
	\newblock {\em Algebraic and Topological Properties of Finite Partially Ordered
		Sets}.
	\newblock Teubner-Texte zur Mathematik. B.G. Teubner, 1988.
	
	\bibitem[Cra87]{C87}
	Y.~Crama.
	\newblock Dualization of regular boolean functions.
	\newblock {\em Discrete Applied Mathematics}, 16(1):79--85, 1987.
	
	\bibitem[FK96]{FK96}
	M.~L. Fredman and L.~Khachiyan.
	\newblock On the complexity of dualization of monotone disjunctive normal
	forms.
	\newblock {\em Journal of Algorithms}, 21:618--628, 1996.
	
	\bibitem[GK99]{GK99}
	V.~Gurvich and L.~Khachiyan.
	\newblock On generating the irredundant conjunctive and disjunctive normal
	forms of monotone \text{Boolean} functions.
	\newblock {\em Discrete Applied Mathematics}, 96-97(1):363--373, 1999.
	
	\bibitem[HJ90]{HJ90}
	R.~A. Horn and C.~R. Johnson.
	\newblock {\em Matrix analysis}.
	\newblock Cambridge University Press, 1990.
	
	\bibitem[KBE{\etalchar{+}}07]{KBEGM07}
	L.~Khachiyan, E.~Boros, K.~Elbassioni, V.~Gurvich, and K.~Makino.
	\newblock Dual-bounded generating problems: Efficient and inefficient points
	for discrete probability distributions and sparse boxes for multidimensional
	data.
	\newblock {\em Theor. Comput. Sci.}, 379(3):361--376, 2007.
	
	\bibitem[Lov83]{L83}
	L.~Lovasz.
	\newblock Submodular functions and convexity.
	\newblock In M.~Grotschel A.~Bachem and B.~Korte, editors, {\em Mathematical
		Programming: The State of the Art}, pages 235--257, New York, 1983.
	Springer-Verlag.
	
	\bibitem[Mat02]{M02}
	J.~Matousek.
	\newblock {\em Lectures on Discrete Geometry}.
	\newblock Springer-Verlag, Berlin, Heidelberg, 2002.
	
	\bibitem[PS94]{PS94}
	U.~N. Peled and B.~Simeone.
	\newblock An $o(nm)$-time algorithm for computing the dual of a regular boolean
	function.
	\newblock {\em Discrete Applied Mathematics}, 49(1-3):309--323, 1994.
	
	\bibitem[WX08]{WX08}
	A.~Wigderson and D.~Xiao.
	\newblock Derandomizing the ahlswede-winter matrix-valued chernoff bound using
	pessimistic estimators, and applications.
	\newblock {\em Theory Comput.}, 4(1):53--76, 2008.
	
	\bibitem[Y{\"{u}}c02]{Y02}
	{\"{U}}.~Y{\"{u}}ceer.
	\newblock Discrete convexity: convexity for functions defined on discrete
	spaces.
	\newblock {\em Discret. Appl. Math.}, 119(3):297--304, 2002.
	
\end{thebibliography}
\appendix
\section{Omitted Proofs}
\begin{proof}[Proof of Proposition~\ref{p1}]
	Suppose that $f$ is supermodular. Consider $j\in[k]$, $z\in\cC_j\setminus\{c_j\}$, and
	$\bx',\bx''\in\cC_1\times\ldots\times\cC_{j-1}\times\{z\}\times\cC_{j+1}\times\ldots\times\cC_k$ such that $\bx'\le\bx''$. To show that $\partial_f(\bx',j,z)\le\partial_f(\bx'',j,z)$,  we take $\bx:=\bx'\vee \bone^j$ and $\by:=\bx''$ in~\raf{sup} to get
	\begin{align*}
		f(\bx''\vee \bone^j)+f(\bx')&=f((\bx'\vee \bone^j)\vee\bx'')+f((\bx'\vee \bone^j)\wedge\bx'')\\
		&\ge f(\bx'\vee \bone^j)+f(\bx''),
	\end{align*}
	giving the desired inequality. On the other hand, suppose that $\partial_f(\bx,j,z)$ is monotone in $\bx\in\cC_1\times\ldots\times\cC_{j-1}\times\{z\}\times\cC_{j+1}\times\ldots\times\cC_k$, for any  $j\in[n]$ and $z\in\cC_j\setminus\{c_j\}$. Consider $\bx,\by\in\cC$. We need to show that~\raf{sup} holds. Let $S(\bx,\by):=\{j\in[k]:~x_j>y_j\}$. The proof is by induction on the size of $S(\bx,\by)$. If $S(\bx,\by)=\emptyset$ (meaning that $\bx\le\by$) then~\raf{sup} holds as an equality and there is nothing to prove. Otherwise, taking any $j\in S(\bx,\by)$ and using montonicity of $\partial_f(\cdot,j,z)$ for $z:=(\bx\wedge\by)_j+\add{\ell-1}=y_j+\add{\ell-1}$ and $\ell\in\{1,\ldots,x_j-y_j\}$, we obtain
	\begin{align}\label{ineqs}
		f(\by+\ell\bone^j)-f(\by+(\ell-1)\bone^j)\ge f(\bx\wedge\by+\ell\bone^j)-f(\bx\wedge\by+(\ell-1)\bone^j).
	\end{align}
	Summing~\raf{ineqs} over all $\ell\in\{1,\ldots,x_j-y_j\}$, we get 
	\begin{align}\label{e2}
		f(\by')-f(\by)\ge f(\bx\wedge\by+(x_j-y_j)\bone^j)-f(\bx\wedge\by).
	\end{align}
	where $\by':=\by+(x_j-y_j)\bone^j$.
	As $|S(\bx,\by')|<\add{|}S(\bx,\by)|$, we get by induction that
	\begin{align}\label{e3}
		f(\bx\vee\by')+f(\bx\wedge\by')\ge f(\bx)+f(\by').
	\end{align}
	Summing~\raf{e2} and~\raf{e3} and noting that $\bx\wedge\by+(x_j-y_j)\bone^j=\bx\wedge\by'$ and $\bx\vee\by'=\bx\vee\by$ \add{yield} the claim.
\end{proof}

\begin{proof}[Proof of Lemma~\ref{int-lem}(ii)]
	The proof is by induction on \add{$|U|\ge 2$} with the base case, \add{$|U|=2$, being easy to verify}. 
	
	
	For $u\in U$, let $\cS(u):=\{S\in\cS:~u\in S\}$ and $\cT(u):=\{T\in\cT:~u\in T\}$. 
	%
	%
	%
	%
	%
	%
	Let $U_1:=\{u\in U~:~|\cS(u)|\le 1\}$ and $U_2=U\setminus U_1$. We may assume without loss of generality that $|\cT(u)|=0$ for all $u\in U_1$. \add{If $|U_2|=0$ then $\cS$ forms a partition on a subset of $U_1$, and $\cT$ contains at least one set (e.g. $\cT=\{\emptyset\}$). Then $w(U)=w(U_1)\ge \sum_{S\in\cS}|S|\ge t_2|\cS|$, and $\frac{w(U)-t_1}{t_2-t_1}|\cT|\ge |\cS|$.  Let us assume therefore that $U_2\neq\emptyset$. } 
	
	For any $u\in  U_2$, letting $U'(u):=U\setminus\{u\}$, $\cS'(u):=\{S\setminus\{u\}:~S\in\cS(u)\}$ and $\cT'(u):=\{T\setminus\{u\}:~T\in\cT(u)\}$, the sets $\cS'(u)$ and $\cT'(u)$ satisfy the preconditions of the lemma with respect to the weight function $w:U'(u)\to\RR_+$ and thresholds $t_1':=t_1-w(u)$ and $t_2':=t_2-w(u)$. Thus, we get by induction \add{(as $|U'(u)|\ge 2$)}, that 
	\begin{align}\label{e5}
		|\cS(u)|=|\cS'(u)|\le\frac{w(U'(u))-t_1'}{t_2'-t_1'}|\cT'(u)|=\frac{w(U)-t_1}{t_2-t_1}|\cT(u)|.
	\end{align}
	Let $\alpha=\max_{T\in\cT}w(T)$. Then $w(U_1)\le w(U)-\alpha$.
	Multiplying both sides of~\raf{e5} by $w(u)$ and summing up the resulting inequalities over $u\in U_2$, we get by  the threshold separability of $\cS$ and $\cT$,
	\begin{align}\label{e6}
		t_2|\cS|\le\sum_{S\in\cS}w(S)&\le w(U_1)+\sum_{u\in U_2}w(u)|\cS(u)|\le w(U_1)+\sum_{u\in U_2}w(u)\frac{w(U)-t_1}{t_2-t_1}|\cT(u)|\nonumber\\ &=w(U_1)+\frac{w(U)-t_1}{t_2-t_1}\sum_{T\in \cT}w(T)\le w(U_1)+\frac{w(U)-t_1}{t_2-t_1}\alpha|\cT|\nonumber\\
		&\le w(U)-\alpha+\frac{w(U)-t_1}{t_2-t_1}\alpha|\cT|,
	\end{align}
	where $\alpha\le t_1$. Note that the right hand side of \raf{e6} is monotone \add{non-decreasing} in $\alpha$ and hence is maximized at $\alpha=t_1$. It follows that
	\begin{align}\label{e7}
		|\cS|\le \frac{w(U)-t_1}{t_2}+\frac{w(U)-t_1}{t_2-t_1}\cdot\frac{t_1}{t_2}|\cT|.
	\end{align}
	Using \add{$t_1<w(U)$} in \raf{e7}, we obtain the stated claim.
\end{proof}

\begin{proof} [Proof of Proposition~\ref{ch1:unbdd}]
	The result follows from the following reduction from the so-called 
	{\em relay cuts} enumeration problem in a relay circuit with two terminals. 
	Let $G=(V,E)$ be a graph with vertex set $V$ and edge set $E$, and two distinguished vertices $s,\add{g}\in V$. 
	To each edge in $e\in E$, is assigned a relay $j(e)\in[n]$ from a given set
	of relays $[n]$ (two or more distinct edges may be assigned identical relays).  
	Let \add{$\cG$} be the family of all minimal {\em $s$-\add{$g$} relay cuts}, i.e., minimal subsets of relays
	that disconnect $s$ and \add{$g$}. It is known that the problem of incrementally generating \add{$\cG$} is NP-hard, see \cite{GK99}. We define a polynomial $f: \{0,1\}\to\add{\ZZ_+^n}$ as follows. Let $\cP$ be the set of \add{walks} between $s$ and $g$ of length $|V|$ in $G$. We associate a variable $x_j$ to each relay $j\in[n]$, and for $\bx\in\{0,1\}^n$, we let 
	\begin{align}\label{relay-f}
		f(\bx)=\sum_{P\in\cP}\prod_{e\in P}x_{j(e)}.
	\end{align}
	Given $\bx\in\{0,1\}^n$, we can compute $f(\bx)$ 
	in polynomial time (this requires only computing the \add{$|V|$-{th}} power of the adjacency
	matrix of the graph $G(\bx)$ obtained from $G$ by deleting all edges \add{$e\in E$} with $x_{j(e)}=0$. \add{In particular}, checking \add{if $f(\bx)\le 0$} is equivalent to checking if there is \add{no} \add{$s$-$g$} path in $G(\bx)$).
	This gives a polynomial time evaluation oracle for $f$. Finally, note
	that minimal $s$-\add{$g$} relay
	cuts are exactly the complements of the maximal feasible solutions $\cF$ of the polynomial
	inequality $f(\bx)\le 0$. 
\end{proof}

\begin{proof}[Proof of Lemma~\ref{l2}]
	For a vector $\bx\in\cC\setminus \{\bzero\}$ \add{and $i\in U$}, let us denote by $j^{\bx}_i$ the index
	of the last component, in the order given by $\sigma_i$, which is larger than $0$, i.e.,
	$j_i^\bx=\max\{j\in[n]~|~x_{\sigma_i(j)}>0\}$. For a vector $\by\in\cC$\add{, index $j\in[n]$} and a permutation $\sigma$, denote by $\by^{\sigma, j}$ the vector $\by'$ with components:
	\begin{equation}\label{eb3}
		y'_{\sigma(j')}=\left\{
		\begin{array}{ll}
			y_{\sigma(j')}&\text{ for } j'<j,\\
			y_{\sigma_i(j')}+1&\text{ for } j'=j,\\
			0&\text{ otherwise}.
		\end{array}
		\right.
	\end{equation}
	Let $\cF_i$ be the set of maximal feasible solutions of the inequality $f_i(\bx)\le t_i$. 
	We claim that for every $\bx\in \cX:=\cI(\cY)\cap \cI(\cF)$ there exists  an $i\in[r]$ and a
	$\by \in\cY$ such that $\bx=\by^{\sigma_i,j^\bx_i}$.
	To see this claim, let us consider $\bx\in \cI(\cY)\cap \cI(\cF)$ and observe that $\bx \ne \bzero$ because  $\bx \in \cI(\cF)$ and  $\cF\neq \emptyset$. As $\cI(\cY) \cap  \cI(\cF)=\bigcup_{i=1}^r\left(\cI(\cY) \cap  \cI(\cF_i)\right)$, there exists an $i\in[r]$ such that $\bx\in \add{\cX_i}:=\cI(\cY)\cap \cI(\cF_i)$. Let $j:=j_i^{\bx}$. Then, as $\bx\in\cI(\cY)$, there exists a $\by\in \cY$ such that
	$\by\geq \bx-\add{\b1^{\sigma_i(j)}}$. For any $j'<j$, we  must have
	$x_{\sigma_i(j')}=y_{\sigma_i(j')}$, since if $x_{\sigma_i(j')}<y_{\sigma_i(j')}$ for some $j'<j$, then
	$f_i(\by)\geq f_i(\bx+\b1^{\sigma_i(j')}-\b1^{\sigma_i(j)})\ge \add{f_i}(\bx)>t_i$ would follow by the 2-monotonicity of $\add{f_i}$, and yielding 
	a contradiction with $f_i(\by)\le t_i$ (which follows from $y\in\cF$). Finally,  the
	definition of $j=j^\bx_i$ implies that $x_{\sigma_i(j')}=0$ for all $j'>j$. Hence,
	our claim and the equality \raf{eb3} follow.
	
	The  above claim implies that
	\[
	\cX\subseteq \{ \by^{\sigma_i,j}~|~\by\in\cY,~i\in\add{U},~j\in[n],~ y_{{\sigma_i(j)}}<c_{\sigma_i(j)}\},
	\]
	and hence \raf{2mon-dbdd} and thus the lemma follow.
\end{proof}
\end{document}